\documentclass{article}
\usepackage{amsthm,authblk,fullpage,amssymb}

  \usepackage{amsmath,amsfonts}
  \usepackage{tikz}
  \usepackage[T1]{fontenc}
  \usepackage[utf8]{inputenc}
  \usepackage{comment}
  \usepackage{xspace}
  \usepackage{enumerate}
  \usepackage{listings}
  \usepackage{todonotes}
  \usepackage{array}
  \usepackage{enumitem}
  \usepackage[noend, noline, ruled]{algorithm2e}
  
  \newtheorem{observation}{Observation}
  \newtheorem{fact}{Fact}
  \newtheorem{lemma}{Lemma}
  \newtheorem{corollary}{Corollary}
  \newtheorem{theorem}{Theorem}

  \usepackage{thmtools}
  \usepackage{thm-restate}
  \usepackage[capitalise]{cleveref}
  
  \usetikzlibrary{decorations.pathreplacing,calc,snakes,patterns}

  \crefname{fact}{Fact}{Facts}
  \crefname{observation}{Observation}{Observations}
  \SetKwInOut{KwRequire}{Require}
  \SetKwComment{Comment}{$\triangleright$\ }{}

  \pagestyle{plain}

  \newcommand{\ceil}[1]{\left\lceil #1 \right\rceil}
  \newcommand{\Oh}{\mathcal{O}}
  \newcommand{\sub}{\subseteq}

  \newcommand{\assign}{\leftarrow}

  \newcommand{\LCE}{\mathrm{LCE}}
  \newcommand{\lcp}{\mathrm{lcp}}
  \newcommand{\nc}{\mathrm{nc}}
  \newcommand{\LTree}{\mathit{LTree}}
  \newcommand{\idx}{r}
  
  \newcommand{\LevelLCE}[1]{\LCE^{(#1)}}
  \newcommand{\LimitedLCE}{\mathrm{Limited\mbox{-}LCE}}

  \newcommand{\leaftext}[1]{$\mathtt{#1}$}

  \usetikzlibrary{trees, decorations.pathmorphing, shapes, arrows, positioning}
  \usepackage{forest}

  \tikzstyle{internal} = [circle, minimum width=3pt,fill, inner sep=1pt, label={[black,below]0:#1}]
  \tikzstyle{leaf} = [rectangle, minimum width=6pt,fill, inner sep=1pt, label={[black,below]0:#1}]
  \tikzstyle{vertex} = [rectangle, minimum width=1pt,fill, inner sep=1pt]
  \tikzstyle{zig} = [draw,thick, decorate,
  decoration={zigzag,amplitude=2pt,segment length=2mm,pre=lineto,pre length=2pt, post=lineto,post length=4pt}, inner sep=5pt]
  \tikzstyle{arr} = [draw,-,thick, inner sep=5pt]

  \title{Near-Optimal Computation of Runs over General Alphabet via Non-Crossing LCE Queries}

  \author{
    Maxime Crochemore
  }
  \author{
    Costas S.\ Iliopoulos
  }
  \author{
    Tomasz Kociumaka\thanks{Supported by Polish budget funds for science in 2013-2017 as a research project under the 'Diamond Grant' program.}
  }
  \author{
    Ritu Kundu
  }
  \author{
    Solon P.\ Pissis
  }
  \author{
    Jakub Radoszewski\thanks{The author is a Newton International Fellow.}
  }
  \author{
    Wojciech Rytter\thanks{Supported by the Polish National Science Center, grant no 2014/13/B/ST6/00770.}
  }
  \author{
    Tomasz Wale\'n$^{\star\,\star\,\star}$
  }

  \affil{
    Department of Informatics, King's College London, London, UK\\
    \texttt{[maxime.crochemore,costas.iliopoulos,ritu.kundu,solon.pissis]@kcl.ac.uk}
  }

  \affil{
    Faculty~of Mathematics, Informatics and Mechanics,\\
    University of Warsaw, Warsaw, Poland\\
    \texttt{[kociumaka,jrad,rytter,walen]@mimuw.edu.pl}
  }

  \date{}

\begin{document}
  \maketitle
\begin{abstract}
  Longest common extension queries (LCE queries) and runs are ubiquitous in algorithmic stringology.
  Linear-time algorithms computing runs and preprocessing for constant-time LCE queries have been known for over a decade.
  However, these algorithms assume a linearly-sortable integer alphabet.
  A recent breakthrough paper by Bannai et.\ al.\ (SODA 2015) showed a link between the two notions: all the runs in a string
  can be computed via a linear number of LCE queries.
  The first to consider these problems over a general ordered alphabet was Kosolobov (\emph{Inf.\ Process.\ Lett.}, 2016),
  who presented an $\Oh(n (\log n)^{2/3})$-time algorithm for answering $\Oh(n)$ LCE queries.
  This result was improved by Gawrychowski et.\ al.\ (accepted to CPM 2016) to $\Oh(n \log \log n)$ time.
  In this work we note a special \emph{non-crossing} property of LCE queries asked in the runs computation.
  We show that any $n$ such non-crossing queries can be answered on-line in $\Oh(n \alpha(n))$ time,
  which yields an $\Oh(n \alpha(n))$-time algorithm for computing runs.
\end{abstract}

  \section{Introduction}
  \emph{Runs} (also called \emph{maximal repetitions}) are a fundamental type of repetitions in a string
  as they represent the structure of all repetitions in a string in a succinct way.
  A run is an inclusion-maximal periodic factor of a string in which the shortest period repeats at least twice.
  A crucial property of runs is that their maximal number in a string of length $n$ is $\Oh(n)$.
  This fact was already observed by Kolpakov and Kucherov \cite{DBLP:conf/focs/KolpakovK99,Kolpakov99}
  who conjectured that this number is actually smaller than $n$, which was known as the runs conjecture.
  Due to the works of several authors \cite{DBLP:conf/mfcs/CrochemoreI07,DBLP:journals/jcss/CrochemoreI08,DBLP:conf/cpm/CrochemoreIT08,DBLP:conf/lata/Giraud08,DBLP:journals/tcs/PuglisiSS08,DBLP:conf/stacs/Rytter06,DBLP:journals/iandc/Rytter07}
  more precise bounds on the number of runs have been obtained, and finally
  in a recent breakthrough paper \cite{DBLP:journals/corr/BannaiIINTT14} Bannai et al. proved the runs conjecture,
  which has since then become the runs theorem
  (even more recently in \cite{DBLP:conf/spire/0001HIL15} the upper bound of $0.957n$ was shown for binary strings).

  Perhaps more important than the combinatorial bounds is the fact that the set of all runs in a string
  can be computed efficiently.
  Namely, in the case of a linearly-sortable alphabet $\Sigma$ (e.g., $\Sigma=\{1,\ldots,\sigma\}$ with $\sigma = n^{\Oh(1)}$)
  a linear-time algorithm based on Lempel-Ziv factorization \cite{DBLP:conf/focs/KolpakovK99,Kolpakov99} was known for a long time.
  In the recent papers of Bannai et al.~\cite{DBLP:journals/corr/BannaiIINTT14,DBLP:conf/soda/BannaiIINTT15} it is
  shown that to compute the set of all runs in a string, it suffices to answer $\Oh(n)$ longest common extension (LCE) queries.
  An LCE query asks, for a pair of suffixes of a string, for the length of their longest common prefix.
  In the case of $\sigma = n^{\Oh(1)}$ such queries can be answered on-line in $\Oh(1)$ time after
  $\Oh(n)$-time preprocessing that consists of computing the suffix array with its inverse, the LCP table and
  a data structure for range minimum queries on the LCP table; see e.g.~\cite{AlgorithmsOnStrings}.
  The algorithms from \cite{DBLP:journals/corr/BannaiIINTT14,DBLP:conf/soda/BannaiIINTT15} use (explicitly
  and implicitly, respectively) an intermediate notion of Lyndon tree (see \cite{DBLP:journals/jct/Barcelo90,DBLP:journals/tcs/HohlwegR03})
  which can, however, also be computed using LCE queries.
  
  Let $T_{\LCE}(n)$ denote the time required to answer on-line $n$ LCE queries in a string.
  In a very recent line of research, Kosolobov \cite{DBLP:journals/ipl/Kosolobov16} showed that, for a general ordered alphabet,
  $T_{\LCE}(n) = \Oh(n (\log n)^{2/3})$, which immediately leads to $\Oh(n (\log n)^{2/3})$-time
  computation of the set of runs in a string.
  In \cite{DBLP:/conf/cpm/GawrychowskiKRW16} a faster, $\Oh(n \log \log n)$-time algorithm for answering $n$ LCE queries has been presented
  which automatically leads to $\Oh(n \log \log n)$-time computation of runs.

  Runs have found a number of algorithmic applications.
  Knowing the set of runs in a string of length $n$ one can compute
  in $\Oh(n)$ time all the local periods and the number of all squares, and also
  in $\Oh(n+T_{\LCE}(n))$ time all distinct squares
  provided that the suffix array of the string is known \cite{DBLP:journals/tcs/CrochemoreIKRRW14}.
  Runs were also used in a recent contribution on efficient answering of internal pattern matching queries
  and their applications \cite{DBLP:conf/soda/KociumakaRRW15}.

  \paragraph{\bf Our Results}
  We observe that the computation of a Lyndon tree of a string and furthermore the computation of all the runs in a string
  can be reduced to answering $\Oh(n)$ LCE queries that are \emph{non-crossing}, i.e.,
  no two queries $\LCE(i,j)$ and $\LCE(i',j')$ are asked with $i<i'<j<j'$ or $i'<i<j'<j$.
  Let $T_{\nc\LCE}(n)$ denote the time required to answer $n$ such queries on-line in a string of length $n$
  over a general ordered alphabet.
  We show that $T_{\nc\LCE}(n) = \Oh(n \alpha(n))$, where $\alpha(n)$ is the inverse Ackermann function.
  As a consequence, we obtain $\Oh(n \alpha(n))$-time algorithms for computing the Lyndon tree,
  the set of all runs, the local periods and the number of all squares in a string over a general ordered alphabet.
  
  Our solution relies on a trade-off between two approaches. The results of \cite{DBLP:/conf/cpm/GawrychowskiKRW16} let us efficiently compute the LCEs if they are short,
  while $\LCE$ queries with similar arguments and a large answer yield structural properties of the string, which we discover and exploit to answer further such queries.

  Our approach for answering non-crossing $\LCE$ queries is described in three sections:
  in \cref{sec:overview} we give an overview of the data structure,
  in \cref{sec:bp} we present the details of the implementation,
  and in \cref{sec:comp} we analyse the complexity of answering the queries.
  The applications including runs computation are detailed in \cref{sec:runs}.
  The appendix contains some supporting examples.

  \section{Preliminaries}
  \paragraph{\bf Strings}
  Let $\Sigma$ be a finite ordered alphabet of size $\sigma$.
  A string $w$ of length $|w|=n$ is a sequence of letters $w[1] \ldots w[n]$ from $\Sigma$.
  By $w[i,j]$ we denote the \emph{factor} of $w$ being a string of the form $w[i] \ldots w[j]$.
  A factor $w[i,j]$ is called proper if $w[i,j] \ne w$.
  A factor is called a \emph{prefix} if $i=1$ and a \emph{suffix} if $j=n$.
  We say that $p$ is a \emph{period} of $w$ if $w[i]=w[i+p]$ for all $i=1,\ldots,n-p$.
  If $p$ is a period of $w$, the prefix $w[1,p]$ is called a \emph{string period} of $w$.
  
  By an interval $[\ell,r]$ we mean the set of integers $\{\ell,\ldots,r\}$.
  If $w$ is a string of length $n$, then an interval $[a,b]$ is called a \emph{run} in $w$
  if $1 \le a < b \le n$, the shortest period $p$ of $w[a,b]$ satisfies $2p \le b-a+1$ and
  none of the factors $w[a-1,b]$ and $w[a,b+1]$ (if it exists) has the period $p$.
  An example of a run is shown in \cref{fig:example-run}.

  \paragraph{\bf Lyndon Words and Trees}
  By $\prec = \prec_0$ we denote the order on $\Sigma$ and by $\prec_1$ we denote
  the reverse order on $\Sigma$.
  We extend each of the orders $\prec_\idx$ for $\idx \in \{0,1\}$ to a lexicographical order on strings over $\Sigma$.
  A string $w$ is called an \emph{$\idx$-Lyndon word} if
  $w \prec_\idx u$ for every non-empty proper suffix $u$ of $w$.
  The \emph{standard factorization} of an $\idx$-Lyndon word $w$ is a pair $(u, v)$ of $\idx$-Lyndon words
  such that $w=uv$ and $v$ is the longest proper suffix of $w$ that is an $\idx$-Lyndon word.

  The \emph{$\idx$-Lyndon tree} of an $\idx$-Lyndon word $w$, denoted as $\LTree_\idx(w)$,
  is a rooted full binary tree defined recursively on $w[1,n]$ as follows:
  \begin{itemize}
    \item $\LTree_\idx(w[i,i])$ consists of a single node labeled with $[i,i]$
    \item if $j-i>1$ and $(u,v)$ is the standard factorization of $w[i,j]$,
      then the root of $\LTree_\idx(w)$ is labeled by $[i,j]$, has left child
      $\LTree_\idx(u)$ and right child $\LTree_\idx(v)$.
  \end{itemize}
  See \cref{fig:example-ltree} for an example.
  We can also define the $\idx$-Lyndon tree of an arbitrary string.
  Let $ \$_0,\$_1$ be special characters smaller than and greater than all the letters from $\Sigma$, respectively.
  We then define $\LTree_\idx(w)$ as $\LTree_\idx(\$_\idx w)$; note that $ \$_\idx w$ is an $\idx$-Lyndon word.

  \paragraph{\bf LCE Queries}
  For two strings $u$ and $v$, by $\lcp(u,v)$ we denote the length of their longest common prefix.
  Let $w$ be a string of length $n$.
  An LCE query $\LCE(i,j)$ computes $\lcp(w[i,n],w[j,n])$.
  An $\ell$-limited LCE query  $\LimitedLCE_{\le \ell}(i,j)$ computes $\min(\LCE(i,j),\ell)$.
  Such queries can be answered efficiently as follows; see Lemma 6.3\ in \cite{DBLP:/conf/cpm/GawrychowskiKRW16}.

  \begin{lemma}[\cite{DBLP:/conf/cpm/GawrychowskiKRW16}]\label{lem:CPM}
    A sequence of $q$ queries $\LimitedLCE_{\le \ell_p}(i_p,j_p)$ can be answered on-line
    in $\Oh((n + \sum_{p=1}^q \log \ell_p)\alpha(n))$ time over a general ordered alphabet.
  \end{lemma}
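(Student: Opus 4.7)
The plan is to combine exponential doubling with a level-indexed family of union-find structures that memoize substring equalities discovered so far, so that each comparison either reuses past work (for $\Oh(\alpha(n))$) or certifies a genuinely new equality that can be charged to the query that produced it.

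\emph{Setup.} For each level $k \in \{0, 1, \ldots, \ceil{\log n}\}$ I would maintain a union-find structure $U_k$ on the positions of $w$, with the invariant: if positions $p$ and $q$ are in the same class of $U_k$ then the factors $w[p, p + 2^k - 1]$ and $w[q, q + 2^k - 1]$ are known to be equal. Initially every $U_k$ is the trivial partition, giving $\Oh(n \alpha(n))$ initialization cost overall. No sorting and no suffix-structure construction is needed.

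\emph{Query algorithm.} To answer $\LimitedLCE_{\le \ell}(i, j)$, I would run an exponential-then-binary search on the match length. In the exponential phase I test equality of the length-$2^k$ prefixes of $w[i,n]$ and $w[j,n]$ for $k = 0, 1, 2, \ldots$, stopping either when $2^k > \ell$ or when a test fails; in the latter case a binary search within the first failing block pinpoints the mismatch, decomposing each candidate prefix into at most $\Oh(\log \ell)$ power-of-two blocks. Each block equality test $\mathrm{Eq}(k, p, q)$ is carried out as follows: consult $U_k$; if $p$ and $q$ already share a class, return true; else if $k = 0$ compare $w[p]$ and $w[q]$ directly and union on equality; otherwise recursively invoke $\mathrm{Eq}(k-1, p, q)$ and $\mathrm{Eq}(k-1, p + 2^{k-1}, q + 2^{k-1})$, and union in $U_k$ when both halves return true.

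\emph{Complexity.} Each query issues $\Oh(\log \ell_p)$ top-level $\mathrm{Eq}$ calls (summing the exponential and binary phases). Every $\mathrm{Eq}$ call either (a) terminates at a union-find hit in $\Oh(\alpha(n))$ time, (b) performs a single letter comparison at the base level, or (c) recurses and, upon success, creates one new union. I would charge case (c) to the top-level call that spawned the recursion and case (a) to the invoker. Together with \cref{lem:CPM}-style amortization of union-find over the Ackermann measure, this would give the claimed $\Oh((n + \sum_p \log \ell_p)\alpha(n))$ bound.

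\emph{Main obstacle.} The delicate part is controlling the recursion: a successful $\mathrm{Eq}(k, \cdot, \cdot)$ could in principle branch into a tree of sub-calls of size proportional to $2^k$, which would be far too expensive. The key claim to verify is that each non-trivial recursive call either hits a cached union (thus costing only $\Oh(\alpha(n))$) or produces a fresh union that can be charged off. In particular one must argue that the total number of unions created at all levels is $\Oh(n + \sum_p \log \ell_p)$, not $\Oh(n \log n)$; this requires a careful potential-function argument relating the levels at which unions appear to the $\log \ell_p$ factors in the query bound, and is the place where the exponential search structure and the hierarchical union-finds must interact cleanly.
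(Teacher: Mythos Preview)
This lemma is not proved in the present paper; it is imported from \cite{DBLP:/conf/cpm/GawrychowskiKRW16} (stated there as Lemma~6.3), so there is no in-paper argument to compare against. Your plan---a hierarchy of union--find structures $U_k$ memoizing length-$2^k$ block equalities, combined with doubling/binary search---is indeed the architecture underlying the cited result, so the overall shape is right.

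That said, what you label the ``main obstacle'' is not a detail to be filled in later; it is the entire substance of the bound, and your sketch leaves it open. With the recursion exactly as you describe, every successful non-cached call $\mathrm{Eq}(k,p,q)$ performs one union at level $k$ and spawns two calls at level $k-1$; the only a~priori bound on the total number of unions is therefore $\sum_{k\le \log n}(n-1)=\Theta(n\log n)$, whereas the lemma requires $\Oh(n+\sum_p\log\ell_p)$. Your charging scheme does not close this: you charge each union-creating call ``to the top-level call that spawned the recursion,'' but a single top-level $\mathrm{Eq}(k,\cdot,\cdot)$ can spawn a recursion tree with $\Theta(2^k)$ union-creating nodes (on $w=a^n$, one call $\mathrm{Eq}(\lfloor\log n\rfloor,1,2)$ on a cold cache already does this), so that charge is $\Theta(n)$ per top-level call, not $\Oh(1)$. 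One such query can be absorbed by the global $\Oh(n\alpha(n))$ term, but you give no argument preventing a second query, say $\LimitedLCE_{\le n}(1,3)$, from again incurring $\Theta(n)$ new unions---and indeed it does, since after the first query $U_1$ still consists of $n/2$ disjoint pairs $\{1,2\},\{3,4\},\ldots$ that the second query's recursion must merge. The potential-function argument you allude to, showing that the total union count over \emph{all} levels stays within $\Oh(n+\sum_p\log\ell_p)$, is exactly what has to be supplied, and it is not routine. Until that amortization is actually carried out (or the recursion is modified so that unions propagate enough information to cap later recursion depth), the proposal does not establish the stated time bound.
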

  
  The following observation shows a relation between LCE queries and periods in a string
  that we use in our data structure; for an illustration see \cref{fig:fajny}.

\renewcommand{\AA}{\mathrm{A}}
\newcommand{\BB}{\mathrm{B}}

  \begin{observation}\label{obs:perlce}
    Assume that the factors $w[a,d_\AA-1]$ and $w[b,d_\BB-1]$ have the same
    string period, but neither $w[a,d_\AA]$ nor $w[b,d_\BB]$
    has this string period.
    Then $$\LCE(a,b)=\begin{cases}
    \min(d_\AA-a,d_\BB-b) & \text{if }d_\AA-a \ne d_\BB-b,\\
    d_\AA-a+\LCE(d_\AA,d_\BB) & \text{otherwise.}
    \end{cases}$$
  \end{observation}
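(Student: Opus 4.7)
The plan is to unfold the definition of the string period and verify the equality case by case. Let $p$ be the length of the common string period and $P=w[a,a+p-1]=w[b,b+p-1]$. The hypothesis that $w[a,d_\AA-1]$ has string period $P$ means $w[a+k]=P[(k \bmod p)+1]$ for every $k \in [0,d_\AA-a-1]$, and analogously on the $b$-side for every $k \in [0, d_\BB-b-1]$. The assumption that $w[a,d_\AA]$ does not have period $p$ then forces $w[d_\AA] \ne P[((d_\AA-a) \bmod p)+1]$, and symmetrically $w[d_\BB] \ne P[((d_\BB-b) \bmod p)+1]$.

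From these two periodicity descriptions, for every $k < \min(d_\AA-a, d_\BB-b)$ both $w[a+k]$ and $w[b+k]$ equal $P[(k \bmod p)+1]$, so they match. This already gives $\LCE(a,b) \ge \min(d_\AA-a, d_\BB-b)$, and it remains to inspect the first position where either the $a$-side or the $b$-side leaves the periodic zone.

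If $d_\AA-a \ne d_\BB-b$, assume without loss of generality $d_\AA-a < d_\BB-b$ and set $k=d_\AA-a$. On the $b$-side $b+k<d_\BB$, so $w[b+k]=P[(k \bmod p)+1]$; on the $a$-side $w[a+k]=w[d_\AA]\ne P[(k \bmod p)+1]$ by the breaking hypothesis. Hence the two characters differ and $\LCE(a,b)=k=\min(d_\AA-a,d_\BB-b)$. If instead $d_\AA-a=d_\BB-b=:k$, the first $k$ characters match as above, and the length of the longest common prefix beyond that is precisely $\lcp(w[d_\AA,n],w[d_\BB,n])=\LCE(d_\AA,d_\BB)$, giving $\LCE(a,b)=k+\LCE(d_\AA,d_\BB)$.

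There is no real obstacle; the only care needed is in bookkeeping the modular index of $P$ and in separating the generic "characters agree while both sides are inside the period" reasoning from the crucial use of the breaking assumption, which is what distinguishes the two cases and forces a mismatch at exactly the right place.
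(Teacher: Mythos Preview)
Your argument is correct. The paper itself does not give a proof of this statement: it is labelled an \emph{Observation} and is accompanied only by an illustrative figure (\cref{fig:fajny}), so there is no proof in the paper to compare against. Your write-up supplies exactly the straightforward verification the paper omits: both sides agree with the periodic pattern $P$ for the first $\min(d_\AA-a,d_\BB-b)$ positions, and the maximality assumption on $d_\AA$ (respectively $d_\BB$) forces a mismatch at the next position when the two lengths differ, while in the equal-length case the remaining comparison is precisely $\LCE(d_\AA,d_\BB)$. The only implicit use you make---that $d_\AA-a\ge p$ so that the ``breaking'' position is determined by the single new character $w[d_\AA]$---is guaranteed by the definition of string period, so the proof is complete as written.
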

  
        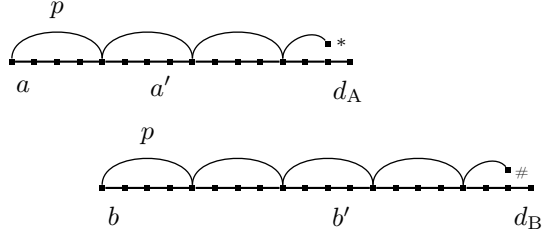
\begin{figure}[htpb]
    \begin{center}
    \begin{tikzpicture}[scale=1.2, auto,swap]
    \foreach \pos/\name in {{(1,0.2)/s}, {(2,0.2)/t}, {(3,0.2)/u},{(3.5,0.4)/a},
                           {(3,-1.2)/x}, {(4,-1.2)/y}, {(5,-1.2)/z}, {(5.5,-1.0)/d}}
      \node[vertex] (\name) at \pos {};
      \node[vertex] (r) at (0,0.2) {};
      \node[vertex] (v) at (1,-1.2) {};
      \node[vertex] (w) at (2,-1.2) {};
      \node[vertex] (b) at (3.5,0.2) {};
      \node[vertex] (c) at (5.5,-1.2) {};

 \node[label=below:$a$] at (0.125,0.2) {};
 \node[label=below:$b$] at (1.125,-1.2) {};
 \node[label=below:$d_{\mathrm{A}}$] at (3.725,0.2) {};
 \node[label=below:$d_{\mathrm{B}}$] at (5.725,-1.2) {};

   \foreach \x in {0.25,0.5,...,3.75} \draw (\x,0.2) node[vertex] {};
   \foreach \x in {1.25,1.5,...,5.75} \draw (\x,-1.2) node[vertex] {};
   \path (3.5,0.2) edge [arr] (3.75,0.2);
   \path (5.5,-1.2) edge [arr] (5.75,-1.2);
   \draw (3.65,0.2) node[above] {\scriptsize{*}};
   \draw (5.65,-1.2) node[above] {\tiny{\#}};
 
   \node[label=below:$a'$] at (1.65,0.25) {};
   \node[label=below:$b'$] at (3.65,-1.2) {};
      
 	\foreach \source/ \dest in {s/t, t/u, w/x, x/y, y/z}
        \path (\source) edge [arr, out=90, in=90, line width=0.5]   node{} (\dest);
    \foreach \source/ \dest in {r/s, s/t, t/u, u/b, v/w, w/x, x/y, y/z, z/c}
    	\path (\source) edge [arr]   node{} (\dest);
    \path (r) edge [arr, out=90, in=90, line width=0.5]   node[label=above:$p$]{} (s);
    \path (v) edge [arr, out=90, in=90, line width=0.5]   node[label=above:$p$]{} (w);
    \path (u) edge [arr, out=90, in=130, line width=0.5]   node{} (a);
    \path (z) edge [arr, out=90, in=130, line width=0.5]   node{} (d);

\end{tikzpicture}
    \end{center}
    \caption{
      In this example figure $d_\AA-a=14$, $d_\BB-b=18$, and $p=4$.
      We have $\LCE(a,b)=14$ and $\LCE(a',b')=8+\LCE(d_\AA,d_\BB)$.
    }\label{fig:fajny}
    \end{figure}

  \paragraph{\bf Non-Crossing Pairs}
  For a positive integer $n$, we define the set of pairs 
  $$P_n = \{(a,b)\in \mathbb{Z}^2 : 1\le a\le b \le n\}.$$
  Pairs $(a,b)$ and $(a',b')$ are called \emph{crossing} if $a < a' < b < b'$ or $a' < a < b' < b$.
  A subset $S \sub P_n$ is called \emph{non-crossing} if it does not contain crossing pairs.
   
   A graph $G$ is called \emph{outerplanar} if it can be drawn on a plane without crossings 
   in such a way that all vertices belong to the unbounded face.
   An outerplanar graph on $n$ vertices has less than $2n$ edges (at most $2n-3$ for $n\ge 2$).
  
    \begin{fact}\label{fct:ncb}
    A non-crossing set of pairs $S\subseteq P_n$ has less than $3n$ elements.
  \end{fact}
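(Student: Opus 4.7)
The plan is to reduce the statement to the outerplanar edge bound already cited in the paper. First I would split $S$ according to whether $a = b$: let $S_d = \{(a,a) \in S\}$ be the \emph{diagonal} pairs and $S_e = S \setminus S_d$ the \emph{proper} pairs. Trivially $|S_d| \le n$, since there are only $n$ diagonal pairs in $P_n$ and $S$ is a set. Note that diagonal pairs cannot participate in any crossing, because the definition of crossing uses strict inequalities $a < a' < b < b'$ (or the symmetric case), which cannot hold if $a = b$ or $a' = b'$.

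Next I would interpret $S_e$ as the edge set of a simple graph $G$ on vertex set $\{1,\ldots,n\}$. I would draw $G$ by placing the vertices $1,\ldots,n$ in this cyclic order on the boundary of a disk and realising each pair $(a,b) \in S_e$ (with $a<b$) as a chord. The key observation is that two distinct chords $(a,b)$ and $(a',b')$, written so that the smaller endpoint comes first, cross in the interior of the disk if and only if $a < a' < b < b'$ or $a' < a < b' < b$; shared endpoints do not produce an interior crossing. Thus the non-crossing condition on $S_e$ is exactly the condition that this drawing has no interior crossings, so $G$ is outerplanar.

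Applying the cited bound on outerplanar graphs ($|E(G)| < 2n$, with $|E(G)| \le 2n-3$ for $n \ge 2$) we get $|S_e| < 2n$, and therefore
\[
|S| \;=\; |S_d| + |S_e| \;<\; n + 2n \;=\; 3n,
\]
which is the required bound. The small case $n=1$ is immediate since $|P_1| = 1 < 3$.

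There is no serious obstacle: the only point that needs care is verifying the correspondence between the combinatorial non-crossing condition on pairs and the geometric non-crossing condition for chords, which is a direct unpacking of the definition once one fixes the convention that the smaller coordinate is written first.
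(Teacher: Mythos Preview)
Your proof is correct and follows essentially the same approach as the paper: separate off the diagonal pairs (at most $n$ of them), realise the remaining pairs as chords among $n$ points on a circle, use the non-crossing hypothesis to conclude the drawing is outerplanar, and apply the $<2n$ edge bound. Your write-up is in fact slightly more careful than the paper's, spelling out why diagonal pairs cannot cross anything and why shared endpoints do not create interior crossings.
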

  \begin{proof}
  We associate $S\setminus \{(a,a): 1\le a \le n\}$ with a plane graph on vertices $\{1,\ldots,n\}$ drawn on a circle
  in this order, and edges represented as straight-line segments. 
  The non-crossing property of pairs implies that these segments do not intersect.
  Thus, the graph drawing is outerplanar, and therefore the number of edges is less than $2n$.
  Accounting for the pairs of the form $(a,a)$, we get the claimed upper bound.
  \qed\end{proof}

  For a set of pairs $S=\{(a_i,b_i): 1\le i \le k\}$ and a positive integer $t$, by $\ceil{S/t}$ we denote the set
  $\{(\ceil{\frac{a_i}{t}},\ceil{\frac{b_i}{t}}): 1\le i \le k\}$.

  \begin{observation}\label{obs:inherit}
    If $S$ is non-crossing, then $\ceil{S/t}$ is also non-crossing.
  \end{observation}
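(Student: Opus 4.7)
The plan is to prove the contrapositive: if $\lceil S/t\rceil$ contains a crossing pair, then so does $S$. Since the map $x \mapsto \lceil x/t\rceil$ is monotone non-decreasing on integers, the strict inequality $\lceil x/t\rceil < \lceil y/t\rceil$ forces $x < y$. This is the only analytic fact I need, and it follows immediately from the definition of the ceiling.

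Concretely, suppose $(\lceil a/t\rceil,\lceil b/t\rceil)$ and $(\lceil a'/t\rceil,\lceil b'/t\rceil)$ are crossing pairs in $\lceil S/t\rceil$, originating from pairs $(a,b),(a',b')\in S$. Without loss of generality,
\[
\lceil a/t\rceil < \lceil a'/t\rceil < \lceil b/t\rceil < \lceil b'/t\rceil.
\]
Applying monotonicity of the ceiling to each of the three strict inequalities yields $a < a'$, $a' < b$, and $b < b'$, so $a<a'<b<b'$, meaning $(a,b)$ and $(a',b')$ are crossing in $S$. The symmetric case $a'<a<b'<b$ is handled identically. This contradicts the assumption that $S$ is non-crossing, completing the argument.

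I do not anticipate any real obstacle here; the entire content of the observation is the monotonicity of the ceiling function, and the only care needed is to observe that strict inequality between ceilings lifts to strict inequality between the arguments (which is where non-strictness of the ceiling as a map does \emph{not} cause trouble).
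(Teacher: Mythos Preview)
Your argument is correct: the contrapositive together with the fact that $\lceil x/t\rceil < \lceil y/t\rceil$ forces $x<y$ (by monotonicity of the ceiling) is exactly the content of the observation. The paper in fact states this observation without proof, so your write-up simply supplies the routine verification the authors left implicit.
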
  

\section{High-Level Description of the Data Structure}\label{sec:overview}
  We say that a sequence of $\LCE(a,b)$ queries, for $a \le b$, is \emph{non-crossing} if the underlying collection of pairs $(a,b)$ is non-crossing.
  In this section, we give an overview of our data structure, which answers a sequence of $q$ non-crossing $\LCE$ queries
  on-line in $\Oh(q+n\cdot \alpha(n))$ total time. 
  
  The data structure is composed of $\ceil{\log n}$ levels.
  Function $\LevelLCE{i}(a,b)$ corresponds to the level $i$
  and returns $\LCE(a,b)$.
  In the computation it may make calls to $\LevelLCE{i+1}(a,b)$.
  However, we make sure that the total number of such calls is bounded.
  Each original $\LCE(a,b)$ query is first asked at the level 0.

  The implementation of $\LevelLCE{i}(a,b)$ consists of two phases.
  If $\LCE(a,b) \ge 3\cdot 2^i$, then this $\LevelLCE{i}$ query is called \emph{relevant};
  otherwise it is called \emph{short}.
  In the first phase, we check the type of the query via a $\LimitedLCE_{\le 3\cdot 2^i}(a,b)$ query.
  This lets us immediately answer short queries.
  In the second phase, we know that the query is relevant, and we try to deduce the answer based on data gathered while processing \emph{similar} queries
  or to learn some information useful for answering future \emph{similar} queries by asking $\LevelLCE{i+1}$ queries.
  
  We shall say that $\LevelLCE{i}$ queries for $(a,b)$ and $(a',b')$ are \emph{similar} if $\lceil{\frac{a}{2^i}}\rceil=\lceil{\frac{a'}{2^i}}\rceil$
  and $\lceil{\frac{b}{2^i}}\rceil=\lceil{\frac{b'}{2^i}}\rceil$. 
  Each equivalence class of this relation is processed by an independent component, called a \emph{block-pair},
  identified by a pair of \emph{blocks} $(A,B)$,
  which are intervals of the form $[x\cdot 2^i+1, (x+1)\cdot 2^i]$ containing
  indices $a$ and $b$, respectively.
  If a relevant $\LevelLCE{i}(a,b)$ query satisfies $a\in A$ and $b\in B$ for some block-pair $(A,B)$, 
  we say that the block-pair is \emph{responsible} for the query
  or that the query \emph{concerns} the block-pair.
  As we show in \cref{sec:comp}, the pairs of interval right endpoints of block-pairs at each level are non-crossing
  (whereas $\LevelLCE{i}$ queries that will be asked for $i \ge 1$ are non necessarily non-crossing).

  The implementation of a block-pair, summarized in the lemma below, is given in \cref{sec:bp}.
  
  \begin{restatable}{lemma}{lembp}\label{lem:bp}
  Consider a sequence of relevant $\LevelLCE{i}$ queries concerning a block-pair $(A,B)$.
  The block-pair can answer these queries on-line in worst-case constant time plus
  the time to answer at most four $\LevelLCE{i+1}(a,b)$ queries,
  such that each either corresponds to the currently processed $\LevelLCE{i}$ query or satisfies $a< b \le a+2^{i+1}$.
  \end{restatable}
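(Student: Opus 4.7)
The key idea is that a relevant $\LevelLCE{i}$ query guarantees $\LCE(a,b)\ge 3\cdot 2^i$, whereas $|A|=|B|=2^i$, so all relevant queries routed to the same block-pair share a common periodic structure that \cref{obs:perlce} can exploit. A constant amount of state per block-pair should therefore suffice to answer many similar queries in amortised constant time.

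Concretely, I plan to have the block-pair store, after its first relevant query has been handled, a triple $(p,d_A,d_B)$ consisting of the common string period $p$ together with the positions $d_A$, $d_B$ where $p$ breaks in $w$ starting from $A$ and from $B$. The very first relevant query $(a_0,b_0)$ is served by a single type-(a) call $\LevelLCE{i+1}(a_0,b_0)$, whose returned length, together with $b_0-a_0$, initialises the triple. For every subsequent relevant query $(a,b)$, I would apply \cref{obs:perlce} with the stored triple: in the generic branch the answer is $\min(d_A-a,d_B-b)$ and is returned in constant time, while in the exceptional branch $d_A-a=d_B-b$ the outstanding tail $\LCE(d_A,d_B)$ is recovered by one type-(b) helper call whose arguments differ by at most $2^{i+1}$. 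When the incoming $(a,b)$ is incompatible with the stored period, I would instead forward it as a type-(a) call $\LevelLCE{i+1}(a,b)$ to obtain the answer directly, and then refresh the triple using at most two further type-(b) helper calls placed near the newly discovered break points.

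The main obstacle will be to prove that the four-call budget always suffices and that every type-(b) helper call really does satisfy the distance bound $b-a\le 2^{i+1}$. The delicate points are, first, verifying via Fine--Wilf style reasoning combined with the block-size bound $|A|=|B|=2^i$ that a single triple $(p,d_A,d_B)$ can coherently describe the periodic behaviour of every relevant query processed so far, and second, carefully choosing the arguments of each helper call so that they lie inside two adjacent level-$(i{+}1)$ blocks rather than spanning the possibly large gap between $A$ and $B$. Resolving both points is exactly the bookkeeping carried out in \cref{sec:bp}.
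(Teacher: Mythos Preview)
Your high-level intuition is right—periodic structure plus \cref{obs:perlce} is exactly what the block-pair exploits—but two concrete steps in your plan do not work, and they are precisely where the four-call budget and the distance bound are at stake.

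\textbf{You cannot initialise $(p,d_A,d_B)$ from a single query.} From the first relevant query $(a_0,b_0)$ and its value $L=\LCE(a_0,b_0)$ you learn only that $w[a_0,a_0+L-1]=w[b_0,b_0+L-1]$; the quantity $b_0-a_0$ is \emph{not} a period bounded by $2^{i+1}$ (it can be as large as $n$), so it gives you nothing you can probe with type-(b) helper calls. In the paper the period $p$ is produced only when a \emph{second}, non-shift-related query $(a,b)$ arrives: then $p=|(a-a_0)-(b-b_0)|\le 2^{i+1}$, and \cref{lem:btoc} certifies that $p$ really is a period near $\max A$ and $\max B$. Accordingly the paper keeps an intermediate ``visited'' state storing just $(a_0,b_0,L)$, answers pure shifts from it in $O(1)$, and upgrades to the period-aware ``full'' state only upon the first non-shift query, using two type-(b) calls $\LevelLCE{i+1}(\max A,\max A+p)$ and $\LevelLCE{i+1}(\max B,\max B+p)$ to locate $d_A,d_B$.

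\textbf{The exceptional-branch tail is not a type-(b) call.} When $d_A-a=d_B-b$ you need $\LCE(d_A,d_B)$, but then $d_B-d_A=b-a$, which is unbounded; so this cannot be a helper call with arguments at distance $\le 2^{i+1}$. The paper instead forwards the \emph{current} query $(a,b)$ to level $i+1$ (a type-(a) call) and recovers $L'=\LCE(d_A,d_B)=\LCE(a,b)-(d_A-a)$, caching $L'$ so this happens at most once.

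With these two corrections the state machine becomes one-way (\emph{initial} $\to$ \emph{visited} $\to$ \emph{full} $\to$ \emph{full}$^{+}$), each transition costs at most one or two level-$(i{+}1)$ calls of the required shape, and the total is four—no ``refresh'' branch is ever needed, because once $d_A,d_B$ are known \cref{lem:c} covers \emph{every} subsequent relevant query.
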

   
  Structural conditions stated in \cref{lem:bp} let us characterize the set of queries passed to the next level. 
  The complexity analysis in \cref{sec:comp} relies on this characterization.

    \newcommand{\A}{\mathbf{initial}}
    \newcommand{\B}{\mathbf{visited}}
    \newcommand{\C}{\mathbf{full}}
    \newcommand{\D}{\mathbf{full}^+}
    \newcommand{\state}{\textrm{state}}

    \section{Block-Pair Implementation}\label{sec:bp}
    Our aim in this section is to prove \cref{lem:bp}.
    Information stored by a block-pair changes through the course of the algorithm,
    and the implementation of the query algorithm depends on what is currently stored.
    We distinguish four states of a block pair $(A,B)$ at level $i$.
    \cref{fig:state-b-c} illustrates two of the states.

    \begin{center}
    \renewcommand*{\arraystretch}{1.5}
      \begin{tabular}{c|>{\raggedright\arraybackslash}p{9cm}<{}}
        $\state(A,B)$ & \emph{description}\\ \hline
            $\A$ & No auxiliary data is stored.\\
    $\B(a_0,b_0,L)$ & $a_0\in A$, $b_0\in B$, $L=\LCE(a_0,b_0)\ge 3\cdot 2^i$.\\
     $\C(d_\AA,d_\BB)$ & $\exists_{p\in [1,2^{i+1}]} \; : $
     $w[\max A , d_\AA-1]$ and $w[\max B, d_\BB -1]$ have common period $p$ and length at least $p+2^i$,
     but neither $w[\max A, d_\AA]$ nor  $w[\max B , d_\BB ]$ has period $p$.\\
 $\D(d_\AA, d_\BB,L')$ & As in $\C(d_\AA,d_\BB)$ plus $L'=\LCE(d_\AA,d_\BB)$.
      \end{tabular}
    \end{center}


    \begin{figure}
    \begin{center}
    \begin{tikzpicture}[scale=0.6]
\tikzstyle{v} = [draw, circle, fill=black, minimum size = 3pt, inner sep = 0 pt, color=black]

\node at (-1,1.8) {$\B(a_0,b_0,L)$};

\draw[densely dashed] (-1,0) -- (-0.25,0)  (8.25,0) -- (9.25,0)  (17.25,0) -- (18.25,0);
\draw[yshift=0.4cm,densely dashed] (-1,0) -- (-0.25,0)  (8.25,0) -- (9.25,0)  (17.25,0) -- (18.25,0);

\begin{scope}
\draw[xshift=-1cm,fill=black!10] (1,0) -- (2.5,0) -- (2.5,0.4) -- node[above]{$A$} (1,0.4) -- cycle;
\node (ap) [v] at (0.5,0) {}; \node at (0.5,0) [below] {$a_0$};
\draw (-0.25, 0) -- (8.25,0) (-0.25, .4) -- (8.25,.4);
\draw [|<->|] (0.5,-1)--node[midway,fill=white] {$L=\LCE(a_0,b_0)$} +(6.5,0);
\draw (7.3,-1) node {\footnotesize$*$};
\end{scope}

\begin{scope}[xshift = 10cm]
\draw[xshift=-1.5cm,fill=black!10] (1,0) -- (2.5,0) -- (2.5,0.4) -- node[above]{$B$} (1,0.4) -- cycle;
\draw (-0.75, 0) -- (7.5,0) (-0.75, .4) -- (7.5,.4);
\node (bp) [v] at (0.5,0) {}; \node at (0.5,0) [below] {$b_0$};
\draw [|<->|] (0.5,-1)--node[midway,fill=white] {$L=\LCE(a_0,b_0)$} +(6.5,0);
\draw (7.3,-1) node {\scriptsize\#};
\end{scope}

\end{tikzpicture}
    \bigskip
    \begin{tikzpicture}[scale=0.6]
\tikzstyle{v} = [draw, circle, fill=black, minimum size = 3pt, inner sep = 0 pt, color=black]

\node at (-1,1.8) {$\D(d_\AA,d_\BB,L')$};

\draw[densely dashed] (-1,0) -- (-0.25,0)  (8.25,0) -- (9.25,0)  (17.25,0) -- (18.25,0);
\draw[yshift=0.4cm,densely dashed] (-1,0) -- (-0.25,0)  (8.25,0) -- (9.25,0)  (17.25,0) -- (18.25,0);

\begin{scope}
\draw[xshift=-1cm,fill=black!10] (1,0) -- (2.5,0) -- (2.5,0.4) -- node[above]{$A$} (1,0.4) -- cycle;
\node (ap) [v] at (6.4,0) {}; \node at (6.4,0) [below] {$d_\AA$};
\draw (-0.25, 0) -- (8.25,0) (-0.25, .4) -- (8.25,.4);
\begin{scope}[xshift=1.5cm, yshift=-0.6cm]
\clip(0, 1) rectangle (4.7, -1);
\foreach \x in {0,1,2,3,4} {
        \draw (\x*1, 0) sin (\x*1+.5, -0.2) cos (\x*1 + 1, 0);
}
\end{scope}

\draw [|<->|] (1.5,-1.6)--node[midway,fill=white,anchor=mid] {$p$} +(1,0);
\draw [|<->|] (2.5,-1.6)--node[midway,fill=white,anchor=mid] {$\ge 2^i$} (6.3,-1.6);
\draw [|<->|] (6.3,-1.1) --node[midway,fill=white,anchor=mid] {$L'$} +(1.5cm,0);
\draw (8.1,-1.1) node {\footnotesize$*$};

\end{scope}

\begin{scope}[xshift = 10cm]
\draw[xshift=-1.5cm,fill=black!10] (1,0) -- (2.5,0) -- (2.5,0.4) -- node[above]{$B$} (1,0.4) -- cycle;
\draw (-0.75, 0) -- (7.5,0) (-0.75, .4) -- (7.5,.4);
\node (bp) [v] at (5.7,0) {}; \node at (5.7,0) [below] {$d_\BB$};

\begin{scope}[xshift=-.7cm,yshift=-0.6cm]
\clip(1.7, 0) rectangle (6.4, -1);
\foreach \x in {0,1,2,3,4,5,6,7} {
        \draw (\x*1, 0) sin (\x*1+.5, -0.2) cos (\x*1 + 1, 0);
}
\node at (0.5,0.2) [above] {$p$};
\end{scope}

\draw [|<->|] (1,-1.6)--node[midway,fill=white,anchor=mid] {$p$} +(1,0);
\draw [|<->|] (2,-1.6)--node[midway,fill=white,anchor=mid] {$\ge 2^i$} (5.7,-1.6);
\draw [|<->|] (5.7,-1.1) --node[midway,fill=white,anchor=mid] {$L'$} +(1.5cm,0);

\draw (7.5,-1.1) node {\scriptsize\#};

\end{scope}

\end{tikzpicture}
    \end{center}
    \caption{Block-pair $(A,B)$ in states $\B(a_0,b_0,L)$ and $\D(d_\AA,d_\BB,L')$.
    }\label{fig:state-b-c}
    \end{figure}

  \subsection{Initial State}

  In this state, we simply forward the query to the level $i+1$, return the obtained $\LCE(a,b)$ value,
  and change the state to $\B(a,b,\LCE(a,b))$.

    \begin{algorithm}[H]
      \caption{$\mathrm{Initial\mbox{-}LCE}^{(i)}_{(A,B)}(a,b)$}\label{alg:a}
    \KwRequire{$\LevelLCE{i}(a,b)$ concerns $(A,B)$, whose state is $\A$}
    $L \assign \LevelLCE{i+1}(a,b)$\Comment*{higher level call}
    transform $(A,B)$ to state $\B(a,b,L)$\;
    \KwRet{$L$}\;
    \end{algorithm}

  \subsection{Visited State}

  In state $\B(a_0,b_0,L)$, we can immediately determine $\LCE(a,b)$
  if $(a,b)$ is a shift of $(a_0,b_0)$.
  Otherwise, we apply \cref{lem:btoc} to move to state $\C$.

  \begin{lemma}\label{lem:btoc}
  Let $\LevelLCE{i}(a,b)$, $\LevelLCE{i}(a',b')$ be similar and relevant queries
   and let $p=|(b-b')-(a-a')|$.
  If $p\ne 0$ and $b'\le b$,  then $\LCE(a,a+p)\ge 2^{i+1}$, i.e., $p$~is a (not necessarily shortest) period of the factor $w[a,a+2^{i+1}+p-1]$.
  \end{lemma}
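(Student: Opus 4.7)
The plan is a direct chain of character-by-character equalities stitching the two LCE hypotheses together. I would first set $\delta_a = a - a'$ and $\delta_b = b - b'$; then the similarity assumption yields $|\delta_a|, |\delta_b| < 2^i$, the hypothesis $b' \le b$ gives $\delta_b \ge 0$, the relevance of both queries gives $\LCE(a,b), \LCE(a',b') \ge 3 \cdot 2^i$, and $p = |\delta_b - \delta_a|$.

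The main step will be, for $k$ in a suitable range, the chain
\[
  w[a+k] \;=\; w[b+k] \;=\; w[b'+\delta_b+k] \;=\; w[a'+\delta_b+k] \;=\; w[a+(\delta_b-\delta_a)+k].
\]
The first equality uses $\LCE(a,b) \ge 3 \cdot 2^i$ (valid for $0 \le k \le 3 \cdot 2^i - 1$), the fourth uses $\LCE(a',b') \ge 3 \cdot 2^i$ (valid whenever $0 \le \delta_b + k \le 3 \cdot 2^i - 1$), and the other two are trivial rewrites of $b = b' + \delta_b$ and $a' = a - \delta_a$. Because $0 \le \delta_b < 2^i$, both constraints are simultaneously satisfied for every $k \in [0, 2^{i+1} - 1]$.

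Reading off the conclusion will then split into two symmetric cases. If $\delta_b \ge \delta_a$, then $\delta_b - \delta_a = p$ and the chain is precisely $w[a+k] = w[a+p+k]$ on $[0, 2^{i+1} - 1]$, i.e.\ $\LCE(a, a+p) \ge 2^{i+1}$. If $\delta_b < \delta_a$, then $\delta_b - \delta_a = -p$ and the chain gives $w[a+k] = w[a-p+k]$; the substitution $k = m + p$ rewrites this as $w[a+m+p] = w[a+m]$ for $m \in [0, 3 \cdot 2^i - 1 - \delta_a]$, which still covers $[0, 2^{i+1} - 1]$ because $\delta_a < 2^i$.

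The main obstacle I expect is the second case: the natural chain runs ``backwards'' to $a - p + k$, so one must redo the range bookkeeping after the substitution. The bound $\delta_a < 2^i$ will take over the role that $\delta_b < 2^i$ played before and just barely preserves a window of length $2^{i+1}$; a careless argument could lose $p$ positions and obtain only $\LCE(a, a+p) \ge 2^{i+1} - p$.
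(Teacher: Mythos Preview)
Your proposal is correct and follows essentially the same route as the paper's proof. The paper phrases the central step as the triangle-type inequality $\LCE(a,a+q)\ge \min(\LCE(a,b),\LCE(a+q,b))$ with $q=(b-b')-(a-a')$, whereas you unroll this to an explicit character-by-character chain; the two-case split on the sign of $q=\delta_b-\delta_a$ and the shift $k=m+p$ in the negative case (trading the bound on $\delta_b$ for the bound on $\delta_a$) are identical to the paper's argument.
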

  \begin{proof}
  We shall first prove that $\LCE(a,a+q)\ge 3\cdot 2^i-(b-b')$ where $q=(b-b')-(a-a')$. 
  First, observe that $a+q=a'+(b-b')$, and thus
  $\LCE(a+q,b)=\LCE(a'+(b-b'),b'+(b-b'))\ge 3\cdot 2^i-(b-b')$ because $\LevelLCE{i}(a',b')$ is relevant.
  Since $\LevelLCE{i}(a,b)$ is also relevant,
  we have $\LCE(a,b)\ge 3\cdot 2^i \ge 3\cdot 2^i-(b-b')$.
  Combining these two inequalities, we immediately get $\LCE(a,a+q)\ge \min(\LCE(a,b),\LCE(a+q,b)) \ge 3\cdot 2^i-(b-b')$,
  as claimed.
  
  If $q > 0$, we have $q=p$, and thus $\LCE(a,a+p)\ge 3\cdot 2^i - (b-b')$.
  Since the two $\LevelLCE{i}$ queries are similar, we have $ 3\cdot 2^i - (b-b') \ge 2^{i+1}$,
  so $\LCE(a,a+p)\ge 2^{i+1}$.
  See \cref{fig:btoc} for an illustration of this case.
  
  Otherwise, $q=-p$, and we have $\LCE(a,a-p)\ge 3\cdot 2^i - (b-b')$,
  which implies $\LCE(a+p,a)\ge 3\cdot 2^i-(b-b')+q = 3\cdot 2^i-(a-a')$.
  Again, the fact that the queries are similar yields
  $3\cdot 2^i-(a-a') \ge 2^{i+1}$, and consequently $\LCE(a,a+p)\ge 2^{i+1}$.
  \qed\end{proof}

\begin{figure}[ht]
  \centering
  \vspace{1cm}
  \begin{tikzpicture}

\tikzstyle{v} = [draw, circle, fill=black, minimum size = 3pt, inner sep = 0 pt, color=black]

\newcommand\LceLine[3]{
  \draw (#1,#2)--+(0,0.2)--+(#3,0.2)--+(#3,0)--cycle;
}
\newcommand\Mark[3]{
  \draw [pattern=north west lines, pattern color=black] (#1,#2)--+(0,0.2)--+(#3,0.2)--+(#3,0)--cycle;
}

\draw[densely dashed] (-0.7, 0) -- (11.9,0.0);
\draw[densely dashed] (-0.7, 0.2) -- (11.9,0.2);

\begin{scope}
\draw (-.35,0)--(5.65,0);
\draw (-.35,.2)--(5.65,.2);
\draw[fill=black!10] (-.1, 0) rectangle (1.3, 0.2);

\node (ap) [v] at (0,0) {}; \node at (1.2,-.3) [anchor=mid] {$a'$};
\node (a) [v] at (1.2,0) {}; \node at (0,-.3) [anchor=mid] {$a$};
 \LceLine{0}{0.7}{4.2}
 \Mark{0}{0.7}{3.7}
 
\LceLine{1.2}{-0.9}{4.2}
\Mark{1.7}{-0.9}{3.7}

\begin{scope}[yshift=1.1cm]
\clip(0, 0) rectangle (5.4, 1);
\foreach \x in {0,...,9} {
        \draw (\x*1.7, 0) sin (\x*1.7+.85, 0.2) cos (\x*1.7 + 1.7, 0);
}
\node at (0.85,0.2) [above] {$p=q$};
\draw [|<->|,transform canvas={yshift=0.6cm}] (1.7,0)--node[midway,fill=white] {$\ge 2^{i+1}$} (5.4,0);

\end{scope}

\draw [|<->|,transform canvas={yshift=-1.4cm}] (1.2,0)--node[midway,yshift=-0.3cm,fill=white] {\small $b-b'$} +(.5,0);

\end{scope}

\begin{scope}[xshift=6.6cm]
\draw (-.55,0)--(4.95,0);
\draw (-.55,.2)--(4.95,.2);
\draw[fill=black!10] (-.3, 0) rectangle (1, 0.2);

\node (bp) [v] at (0,0) {}; \node at (0,-.3) [anchor=mid] {$b'$};
\node (b) [v] at (.5,0) {}; \node at (.5,-.3) [anchor=mid] {$b$};
 \LceLine{.5}{0.7}{4.2}
 \LceLine{0}{-0.9}{4.2}
 
 \Mark{.5}{-0.9}{3.7}
\Mark{.5}{0.7}{3.7}

\draw [|<->|,transform canvas={yshift=-1.4cm}] (bp.center)--node[midway,fill=white] {$\LCE(a',b')\ge 3\cdot 2^i$} +(4.2,0);
\draw [|<->|,transform canvas={yshift=1.4cm}] (b.center)--node[midway,fill=white] {$\LCE(a,b)\ge 3\cdot 2^i$} +(4.2,0);

\end{scope}

\end{tikzpicture}
  \vspace{1cm}

  \begin{tikzpicture}

\tikzstyle{v} = [draw, circle, fill=black, minimum size = 3pt, inner sep = 0 pt, color=black]

\newcommand\LceLine[3]{
  \draw (#1,#2)--+(0,0.2)--+(#3,0.2)--+(#3,0)--cycle;
}
\newcommand\Mark[3]{
  \draw [pattern=north west lines, pattern color=black] (#1,#2)--+(0,0.2)--+(#3,0.2)--+(#3,0)--cycle;
}
\draw[densely dashed] (-0.7, 0) -- (11.9,0.0);
\draw[densely dashed] (-0.7, 0.2) -- (11.9,0.2);

\begin{scope}
\draw (-.35,0)--(5.65,0);
\draw (-.35,.2)--(5.65,.2);
\draw[fill=black!10] (-.1, 0) rectangle (1.3, 0.2);

\node (ap) [v] at (0,0) {}; \node at (0,-.3) [anchor=mid] {$a'$};
\node (a) [v] at (1.2,0) {}; \node at (1.2,-.3) [anchor=mid] {$a$};
 \LceLine{0}{-0.9}{4.2}
\Mark{.5}{-0.9}{3.7}

 \LceLine{1.2}{0.7}{4.2}
\Mark{1.2}{0.7}{3.7}

\begin{scope}[xshift=1.2cm, yshift=1.1cm]
\clip(-1, 0) rectangle (3.7, 1);
\foreach \x in {-1,...,9} {
        \draw (\x*.7, 0) sin (\x*.7+.35, 0.2) cos (\x*.7 + .7, 0);
}
\node at (0.35,0.2) [above] {$p$};
\draw [|<->|,transform canvas={yshift=0.6cm}] (.7,0)--node[midway,fill=white] {$\ge 2^{i+1}$} (3.7,0);

\end{scope}

\draw [|<->|,transform canvas={yshift=-1.4cm}] (ap.center)--node[midway,yshift=-0.3cm,fill=white] {\small $b-b'$} +(.5,0);

\end{scope}

\begin{scope}[xshift=6.6cm]
\draw (-.55,0)--(4.95,0);
\draw (-.55,.2)--(4.95,.2);
\draw[fill=black!10] (-.3, 0) rectangle (1, 0.2);

\node (bp) [v] at (0,0) {}; \node at (0,-.3) [anchor=mid] {$b'$};
\node (b) [v] at (.5,0) {}; \node at (.5,-.3) [anchor=mid] {$b$};
 \LceLine{.5}{0.7}{4.2}
 \LceLine{0}{-0.9}{4.2}
 
 \Mark{.5}{-0.9}{3.7}
\Mark{.5}{0.7}{3.7}

\draw [|<->|,transform canvas={yshift=-1.4cm}] (bp.center)--node[midway,fill=white] {$\LCE(a',b')\ge 3\cdot 2^i$} +(4.2,0);
\draw [|<->|,transform canvas={yshift=1.4cm}] (b.center)--node[midway,fill=white] {$\LCE(a,b)\ge 3\cdot 2^i$} +(4.2,0);

\end{scope}

\end{tikzpicture}
  \vspace{1cm}

  \caption{Illustration of \cref{lem:btoc}: case $q>0$.
    Illustration of \cref{lem:btoc}:
    upper part corresponds to case $q>0$, lower part to case $q<0$.
    In both cases we assume that $\LCE(a+q,b) \le \LCE(a,b)$.
    The marked fragments correspond to $\LCE(a,a+q)=\LCE(a+q,b)$.
  }\label{fig:btoc}
  \end{figure}
    
  In the query algorithm, we first check if $a-a_0=b-b_0$.
  If so, let us denote the common value by $\Delta$.
  Note that $|\Delta|\le 2^i$, $\LCE(a,b)\ge 3\cdot 2^i$, and $\LCE(a_0,b_0)\ge 3\cdot 2^i$.
  This clearly yields $\LCE(a,b)=\LCE(a_0,b_0)+\Delta$, which lets us compute the result in constant time.

  \medskip
     \begin{algorithm}[H]
       \caption{$\mathrm{Visited\mbox{-}LCE}^{(i)}_{(A,B)}(a,b)$}\label{alg:b}
     \KwRequire{$\LevelLCE{i}(a,b)$ concerns $(A,B)$, whose state is $\B(a_0,b_0,L)$
    }
    \eIf{$a-a_0=b-b_0$}{
      \KwRet{$L+a-a_0$}\;
    }{
      $p \assign |(a-a_0)-(b-b_0)|$\;
      $a ' \assign \max A$;\ \ $b ' \assign \max B$\;
      $d_\AA \assign a'+p+ \LevelLCE{i+1}(a', a'+p)$\Comment*{higher level call}
      $d_\BB \assign b'+p + \LevelLCE{i+1}(b', b'+p)$\Comment*{higher level call}
      transform $(A,B)$ to state $\C(d_\AA,d_\BB)$\;
      \KwRet{$\mathrm{Full\mbox{-}LCE}^{(i)}_{(A,B)}(a,b)$}\Comment*{recursive call on state $\C$}
    }
    \end{algorithm}

    \smallskip
  Otherwise, our aim is to change the state of the block-pair to $\C$.
  \cref{lem:btoc} lets us deduce that $\LCE(\bar{a},\bar{a}+p)\ge 2^{i+1}$ for some $\bar{a}\in \{a,a_0\}$
  and (by symmetry) $\LCE(\bar{b},\bar{b}+p)\ge 2^{i+1}$ for some $\bar{b}\in \{b,b_0\}$,
  where $p=|(a-a_0)-(b-b_0)|$
  ($\bar{a}$ and $\bar{b}$ depend on the relative order of $b,b_0$ and $a,a_0$, respectively).
  Let $a'=\max A$ and $b'=\max B$.
  We have $\LCE(a',a'+p)\ge 2^i$ and $\LCE(b',b'+p)\ge 2^i$ because $a'-2^i < \bar{a}\le a'$
  and $b'-2^i < \bar{b}\le b'$.
  Such a situation allows for a move to state $\C$. 
  The exact values of $d_\AA$ and $d_\BB$ are computed using a higher level call, which lets us determine $\LCE(a',a'+p)$ and $\LCE(b',b'+p)$.
  Note that $p\le 2^{i+1}$ implies that these queries satisfy the condition of \cref{lem:bp}.
  The answer to the initial $\LevelLCE{i}(a,b)$ query is computed by the routine for state $\C$, which we give below.

  \subsection{Full States}
  In state $\D$ we can answer every relevant query in constant time.
  In state $\C$ we can either answer the query in constant time or make the final query at level $i+1$
  to transform the state to $\D$; see the following lemma.
  
  \begin{lemma}\label{lem:c}
  Consider a relevant $\LevelLCE{i}(a,b)$ query concerning a block-pair $(A,B)$
  in state $\C(d_\AA,d_\BB)$ or $\D(d_\AA,d_\BB,L')$.
    Then     $$\LCE(a,b)=\begin{cases}
  \min(d_\AA-a,d_\BB-b) & \text{if }d_\AA-a\ne d_\BB-b,\\
  d_\AA-a + \LCE(d_\AA,d_\BB) & \text{otherwise}.
  \end{cases}$$
  \end{lemma}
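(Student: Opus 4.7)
The formula in the lemma matches that of Observation~\ref{obs:perlce} verbatim, so my plan is to verify the hypotheses of that observation for the pair $(a,b)$ together with $(d_\AA,d_\BB)$. State $\C(d_\AA,d_\BB)$ (and equivalently $\D(d_\AA,d_\BB,L')$) guarantees a common period $p\in[1,2^{i+1}]$ of $w[\max A, d_\AA-1]$ and $w[\max B, d_\BB-1]$, each of length at least $p+2^i$, which fails at positions $d_\AA$ and $d_\BB$. Combined with the relevance condition $\LCE(a,b)\ge 3\cdot 2^i$, I will propagate this periodic structure leftward to the ranges $[a,d_\AA-1]$ and $[b,d_\BB-1]$; Observation~\ref{obs:perlce} then applies and delivers the claimed formula.

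To propagate the period, it suffices to establish $w[a+k]=w[a+k+p]$ for every $k\in[0,\delta_A-1]$, where $\delta_A:=\max A-a<2^i$, and the symmetric statement on the $B$-side with $\delta_B:=\max B-b<2^i$. Since $k+p<\delta_A+p<2^i+2^{i+1}=3\cdot 2^i$, the LCE equality translates this target to $w[b+k]=w[b+k+p]$. When $k\ge \delta_B$ the latter follows from periodicity of $w[\max B,d_\BB-1]$; when $k<\delta_B$ (so both $a+k<\max A$ and $b+k<\max B$), I iterate via an additional LCE translation and a period shift on the $A$-side, all the indices involved staying within the $3\cdot 2^i$-LCE window. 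The period still fails at $d_\AA$ and $d_\BB$ (inherited from state $\C$ unchanged), and $\LCE(a,b)\ge p$ forces $w[a,a+p-1]=w[b,b+p-1]$, so both propagated factors share the same string period $w[a,a+p-1]$.

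With all hypotheses of Observation~\ref{obs:perlce} verified for $(a,b,d_\AA,d_\BB)$, the claimed formula follows directly. The main obstacle is the period propagation in the regime $k<\min(\delta_A,\delta_B)$, where neither $a+k$ nor $b+k$ lies inside its respective periodic region; one has to carefully bounce between the two sides using LCE equalities and period shifts. The fact that $p$ and $\delta_A,\delta_B$ are each small compared to the $3\cdot 2^i$ buffer afforded by relevance is what makes the bookkeeping go through; a separate safety check is also needed on the $B$-side to ensure that the analogous shifts there never leave $[b,d_\BB-1]$, which is where the length bound $d_\BB-\max B\ge p+2^i$ is used.
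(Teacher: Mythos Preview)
Your plan has a genuine gap: the leftward propagation of the period to $w[a,d_\AA-1]$ and $w[b,d_\BB-1]$ is \emph{false} in general under the stated hypotheses. Concretely, take $i=2$ (so $2^i=4$), blocks $A=[1,4]$, $B=[101,104]$, and $a=1$, $b=101$ (hence $\delta_A=\delta_B=3$). Let $p=1$, set $w[4..20]=w[104..120]=\texttt{x}^{17}$ with $w[21]\ne\texttt{x}\ne w[121]$ (so $d_\AA=21$, $d_\BB=121$), and put $w[1..3]=w[101..103]=\texttt{abc}$. All conditions of state $\C(d_\AA,d_\BB)$ hold and $\LCE(a,b)\ge 12$ (indeed $w[1..20]=w[101..120]$), yet $w[a,d_\AA-1]=w[1,20]=\texttt{abc}\,\texttt{x}^{17}$ does \emph{not} have period $1$. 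Your bouncing argument for the regime $k<\min(\delta_A,\delta_B)$ cannot succeed here: relevance only forces $w[a+k]=w[b+k]$, and the state $\C$ invariant says nothing about positions left of $\max A$ or $\max B$; there is simply no constraint to exploit.

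The paper's proof sidesteps this entirely. Instead of pulling the period left to $(a,b)$, it pushes $(a,b)$ right by a common shift $\Delta=\max(\max A-a,\max B-b)$ to obtain $a'=a+\Delta\ge\max A$ and $b'=b+\Delta\ge\max B$. Relevance gives $\LCE(a,b)\ge 3\cdot 2^i\ge p+\Delta$, so $\LCE(a,b)=\Delta+\LCE(a',b')$ and $\LCE(a',b')\ge p$; the length bounds $d_\AA-\max A,\,d_\BB-\max B\ge p+2^i$ then ensure $w[a',d_\AA-1]$ and $w[b',d_\BB-1]$ each contain a full period and share the same string period, so Observation~\ref{obs:perlce} applies at $(a',b')$ and the formula translates back to $(a,b)$ by adding and subtracting~$\Delta$.
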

  \begin{proof} 
  Let $a_0=\max A$, $b_0=\max B$ and let $p$ be the witness period of the state of $(A,B)$.
  Let us define $\Delta = \max(a_0-a,b_0-b)$, $a'=a+\Delta$, and $b'=b+\Delta$.
  Observe that $\Delta\le 2^i$, $a_0\le a'\le a_0+2^{i}$, and $b_0\le b' \le b_0+2^{i}$.
  The fact that the query is relevant yields $\LCE(a,b)\ge 3\cdot 2^i \ge p + \Delta$, so $\LCE(a,b)=\Delta + \LCE(a',b')$
  and $\LCE(a',b')\ge p$. 
    Moreover, $d_\AA\ge p+2^{i}+a_0$ and $d_\BB\ge p+2^{i}+b_0$ implies that fragments $w[a',d_{\AA}-1]$ and $w[b',d_{\BB}-1]$
  have length at least $p$, and thus they are right-maximal with period $p$.
  Consequently, the fragments $w[a',d_{\AA}-1]$ and $w[b',d_{\BB}-1]$ have the same string period of length~$p$.
  This lets us apply \Cref{obs:perlce}, which gives
   $$\LCE(a',b')=\begin{cases}
  \min(d_\AA-a',d_\BB-b') & \text{if }d_\AA-a'\ne d_\BB-b',\\
  d_\AA-a' + \LCE(d_\AA,d_\BB) & \text{otherwise}.
  \end{cases}$$
  Since $a'=a+\Delta$, $b'=b+\Delta$, and  $\LCE(a,b)=\Delta + \LCE(a',b')$,
  this is clearly equivalent to the claimed formula for $\LCE(a,b)$.
  \qed\end{proof}

    \begin{algorithm}[H]
      \caption{$\mathrm{Full\mbox{-}LCE}^{(i)}_{(A,B)}(a,b)$}\label{alg:c}
    \KwRequire{$\LevelLCE{i}(a,b)$ concerns $(A,B)$, whose state is $\C(d_\AA,d_\BB)$
    or $\D(d_\AA,d_\BB,L')$}

    \vspace*{0.3cm}
    \If{$d_\AA-a \ne d_\BB-b$}{\KwRet{$\min(d_\AA-a,d_\BB-b)$}\;}
    \Else{
        \If{$(A,B)$ is in state $\C(d_\AA,d_\BB)$}{
          $L' \assign \LevelLCE{i+1}(a,b)-(d_\AA-a)$\Comment*{higher level call}
         transform $(A,B)$ to state $\D(d_\AA, d_\BB,L')$\;
        }
    	\KwRet{$d_\AA-a+L'$}\;
    }
    \end{algorithm}

  \subsection{Proof of \cref{lem:bp}}
  
  \lembp*
  \begin{proof}
  \cref{alg:a,alg:b,alg:c} answer queries concerning the block-pair $(A,B)$, and use constant time.
  The level $i+1$ call is only made when the state changes. 
  The original query is forwarded during a shift from state $\A$ to $\B$ and from state $\C$ to $\D$, 
  while during a shift from $\B$ to $\C$ two LCE queries are asked, both with arguments at distance $p\le 2^{i+1}$, as claimed.
  \qed\end{proof}

    \section{Complexity Analysis}\label{sec:comp}
    \cref{alg:overview} summarizes the implementation of the $\LevelLCE{i}(a,b)$ function.
    As mentioned in \cref{sec:overview}, we first compute $\LimitedLCE_{\le 3\cdot 2^i}(a,b)$,
    which might immediately give us the sought value $\LCE(a,b)$.
    Otherwise the query is relevant, and we refer to the block-pair $(A,B)$ which is responsible for the query.
   
        \begin{algorithm}[h]
          \caption{$\LevelLCE{i}(a,b)$}\label{alg:overview}
    $\ell \assign \LimitedLCE_{\le 3\cdot 2^i}(a,b)$\;
    \eIf(\Comment*[f]{short query}){$\ell < 3\cdot 2^i$}{%
      \KwRet{$\ell$}\;
    }(\Comment*[f]{relevant query}){%
      $(A,B) \assign $ block-pair responsible for the query $(a,b)$ at level $i$\;
      \KwSty{return}:\\
      \begin{tabular}{p{0.3cm}p{4cm}p{5cm}}
      & $\mathrm{Initial\mbox{-}LCE}^{(i)}_{(A,B)}(a,b)$ & if $(A,B)$ is in state $\A$ \\
      & $\mathrm{Visited\mbox{-}LCE}^{(i)}_{(A,B)}(a,b)$ & if $(A,B)$ is in state $\B$ \\
      & $\mathrm{Full\mbox{-}LCE}^{(i)}_{(A,B)}(a,b)$ & if $(A,B)$ is in state $\C$ or $\D$ \\
      \end{tabular}
    }
    \end{algorithm}

  \noindent
  Let $S_i = \{ (a,b) : \LevelLCE{i}(a,b)\text{ is called}\,\}$.
  Then $\ceil{S_i / 2^{i}}$ corresponds to the set of pairs of interval right endpoints of block-pairs at level $i$.

  \begin{fact}\label{fct:nci}
    The set $\ceil{S_i / 2^{i}}$ is non-crossing.
  \end{fact}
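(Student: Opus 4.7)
The plan is to proceed by induction on $i$. For the base case $i=0$, the set $\lceil S_0/2^0\rceil = S_0$ is exactly the set of input queries, which is non-crossing by the standing assumption on the sequence of $\LCE$ queries being answered.

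For the inductive step I would assume $\lceil S_i/2^i\rceil$ is non-crossing and analyze how pairs in $S_{i+1}$ are generated. By \cref{lem:bp}, every pair in $S_{i+1}$ arises from the processing of some level-$i$ query and falls into one of two categories: either (a) it is a \emph{forwarded} query whose pair $(a,b)$ coincides with its parent level-$i$ call (the $\A\to\B$ and $\C\to\D$ transitions in \cref{alg:a,alg:c}), or (b) it is an \emph{auxiliary} query generated during the $\B\to\C$ transition in \cref{alg:b}, which by \cref{lem:bp} satisfies $a<b\le a+2^{i+1}$.

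I would handle the two categories separately. The images of type-(a) pairs under division by $2^{i+1}$ lie inside $\lceil S_i/2^{i+1}\rceil$, and using the arithmetic identity $\lceil\lceil a/2^i\rceil/2\rceil=\lceil a/2^{i+1}\rceil$, this set equals $\lceil\,\lceil S_i/2^i\rceil/2\,\rceil$, which is non-crossing by the inductive hypothesis together with \cref{obs:inherit} applied with $t=2$. For a type-(b) pair $(a,b)$ with $b-a\le 2^{i+1}$, the image $(\lceil a/2^{i+1}\rceil,\lceil b/2^{i+1}\rceil)$ has its two coordinates differing by at most $1$; but any crossing configuration $x<x'<y<y'$ requires $y-x\ge 2$ (an integer $x'$ lies strictly between $x$ and $y$) and similarly $y'-x'\ge 2$, so no pair with coordinates within $1$ can take part in a crossing with any other pair. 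Combining the two cases, $\lceil S_{i+1}/2^{i+1}\rceil$ is non-crossing.

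The only potential obstacle is making sure the dichotomy from \cref{lem:bp} is faithfully used (in particular, that forwarded queries really keep the same $(a,b)$, which is what lets us reduce type-(a) to \cref{obs:inherit}); everything else is a short combinatorial observation about what pairs of integers can cross.
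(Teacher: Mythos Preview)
Your proof is correct and follows essentially the same route as the paper: induction on $i$, the decomposition $S_{i+1}\subseteq S_i\cup\{(a,b):a<b\le a+2^{i+1}\}$ coming from \cref{lem:bp}, the use of \cref{obs:inherit} (via the identity $\lceil\lceil a/2^i\rceil/2\rceil=\lceil a/2^{i+1}\rceil$) for the first part, and the observation that pairs whose coordinates differ by at most~$1$ cannot participate in any crossing for the second part. The paper's write-up is terser but the argument is the same.
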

  \begin{proof}
    We proceed by induction on $i$. The base case is trivial from the assumption on the input sequence.
    \cref{lem:bp} proves that $S_{i+1}\sub S_i \cup \{(a,b) : a < b \le a+2^{i+1}\}$.
    Hence, 
    $\ceil{S_{i+1}/2^{i+1}}\sub \ceil{\ceil{S_{i}/2^i}/2} \cup \{(a,b): a\le b \le a+1\}$.
    The first component is non-crossing by the inductive hypothesis combined with \cref{obs:inherit}.
    Pairs of the form $(a,a)$ and $(a,a+1)$ do not cross any other pair, so adding them to a non-crossing
    family preserves this property. 
  \qed\end{proof}
  
  \noindent
  Consequently, \cref{fct:ncb} proves that the number of block-pairs responsible for a query at level $i-1$
  is bounded by $\frac{3n}{2^{i-1}}$.
  Each of them yields at most 4 queries at level $i$.
  This leads straight to the following bound.
  
  \begin{observation}\label{obs:few}
  $|S_{i}|\le \frac{24n}{2^{i}}$ for $i\ge 1$.
  \end{observation}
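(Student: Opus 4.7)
The plan is to bound $|S_i|$ by counting block-pairs at level $i-1$ and then invoking \cref{lem:bp} to charge each level-$i$ query to such a block-pair. Every query in $S_i$ for $i\ge 1$ arises as a level $i+1$ call made while processing some query in $S_{i-1}$ (recall that original queries enter at level $0$), and by \cref{lem:bp} each block-pair at level $i-1$ generates at most four such calls. Hence it suffices to bound the number of block-pairs at level $i-1$, which by definition equals $|\ceil{S_{i-1}/2^{i-1}}|$.

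For this bound I would invoke \cref{fct:nci}, which asserts that $\ceil{S_{i-1}/2^{i-1}}$ is non-crossing, together with \cref{fct:ncb}. Since the coordinates of pairs in $\ceil{S_{i-1}/2^{i-1}}$ lie in $\{1,\ldots,\ceil{n/2^{i-1}}\}$, \cref{fct:ncb} gives a bound of less than $3\ceil{n/2^{i-1}}\le 3n/2^{i-1}$ on the number of block-pairs at level $i-1$ (valid as soon as $n/2^{i-1}\ge 1$; otherwise $S_{i-1}$ and hence $S_i$ is empty and the claim holds trivially).

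Combining the two observations yields
\[
|S_i|\le 4\cdot \frac{3n}{2^{i-1}} = \frac{24n}{2^i},
\]
which is the desired bound. No real obstacle is expected: the lemma's only subtlety is keeping track of the constants and the shift of one level between the counting of block-pairs and the counting of queries, and checking that the ``extra'' queries allowed by \cref{lem:bp} (those of the form $\LevelLCE{i+1}(a,b)$ with $a<b\le a+2^{i+1}$, which arise during the $\B\to\C$ transition) are indeed absorbed by the same factor of four per block-pair. Both points follow immediately from the statement of \cref{lem:bp}, so the proof is essentially a direct chain of the three earlier results.
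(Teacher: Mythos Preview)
Your approach is identical to the paper's: bound the number of level-$(i-1)$ block-pairs via \cref{fct:nci} and \cref{fct:ncb}, then multiply by four using \cref{lem:bp}. The only slip is that the inequality $3\ceil{n/2^{i-1}}\le 3n/2^{i-1}$ points the wrong way (ceilings round \emph{up}); the paper itself simply writes ``bounded by $\frac{3n}{2^{i-1}}$'' without comment, and since the observation is used only asymptotically this rounding issue is immaterial.
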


    \noindent
  If we stored the block-pairs using a hash table, we could retrieve the internal data of the block-pair responsible for $(a,b)$ 
  in randomised constant time.
  However, in the case of non-crossing $\LCE$ queries we can make this time worst-case.
  
  Recall from \cref{fct:ncb} that for a set $S \sub P_n$ of non-crossing pairs
  we can identify $S\setminus \{(a,a): 1\le a \le n\}$ with an outerplanar graph on vertices $\{1,\ldots,n\}$.
  We say that a simple undirected graph has \emph{arboricity} at most $c$ if it can be partitioned into $c$ forests.
  Outerplanar graphs have arboricity at most 2 (see \cite{NashWilliams}) which lets us use the following theorem
  to store $S\setminus \{(a,a): 1\le a \le n\}$.
  Membership queries for pairs $(a,a)$ are trivial to support using an array.

\begin{theorem}[\cite{DBLP:conf/wads/BrodalF99}]\label{thm:arb}
Consider a graph of arboricity $c$ with vertices given in advance and edges revealed on-line.
One can support adjacency queries, asking to return the edge between two given vertices or \textbf{nil} if it does not exist, in worst-case $\Oh(c)$ time,
with edge insertions processed in amortized constant time.
\end{theorem}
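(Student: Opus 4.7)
The plan is to maintain the inserted edges under an orientation whose maximum out-degree is $\Oh(c)$ at every vertex. Such orientations exist statically because arboricity $c$ means the edge set decomposes into $c$ forests; orienting each forest away from its roots gives max out-degree $c$. Our data structure would maintain an analogous orientation on-line, permitting some slack (say max out-degree at most $2c$) so that most insertions can be absorbed locally. For each vertex I would store the (unsorted) linked list of its current out-neighbors. An adjacency query $(u,v)$ then scans both out-lists and returns the edge if it appears in either list, taking worst-case $\Oh(c)$ time because each list has length $\Oh(c)$.

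For the insertion of an edge $\{u,v\}$, orient it from the endpoint with the smaller current out-degree and prepend it to that vertex's out-list in $\Oh(1)$ worst-case time. If this push causes the chosen endpoint's out-degree to exceed the threshold, trigger a re-orientation phase: do a BFS/DFS restricted to \emph{out}-edges starting from the overloaded vertex, looking for an \emph{augmenting path} ending at some vertex whose out-degree is strictly below the threshold, and then reverse every orientation along this path. Reversal preserves the out-degree of every interior vertex of the path and transfers one unit of excess from the overloaded endpoint to the slack one, restoring the invariant in time proportional to the path length.

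The main obstacle---and where most of the proof effort has to go---is the amortized $\Oh(1)$ bound on insertions, which must pay for these searches. I would introduce a potential $\Phi = \sum_v f(\mathrm{outdeg}(v))$ for a convex penalty $f$ chosen so that a single insertion raises $\Phi$ by only $\Oh(1)$ while each edge reversal along an augmenting path decreases $\Phi$ by a constant, covering the step. The existence of sufficiently short augmenting paths is secured by arboricity itself: any induced subgraph on $k$ vertices carries at most $c(k-1)$ edges, so after the BFS has explored $k$ vertices the average out-degree in the explored region is strictly below $c$ and hence some explored vertex must already have slack. The delicate calibration is to choose the slack threshold, the penalty $f$, and the augmenting-search strategy so that (i) the cost of the \emph{entire} search, not only the length of the final reversed path, is charged to the $\Phi$ drop, and (ii) the max out-degree invariant is maintained between operations so that queries remain worst-case $\Oh(c)$; these two requirements pull in opposite directions and reconciling them is the technical heart of the argument.
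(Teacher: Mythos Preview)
The paper does not prove this theorem; it is cited from Brodal and Fagerberg and used as a black box, so there is no in-paper argument to compare your proposal against.

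That said, your outline is the Brodal--Fagerberg idea: keep an $\Oh(c)$-outdegree orientation, answer an adjacency query $(u,v)$ by scanning the two out-lists, and on overflow push the excess along a directed path to a vertex with slack. The one concrete error is your claim that ``each edge reversal along an augmenting path decreases $\Phi$ by a constant'' for $\Phi=\sum_v f(\mathrm{outdeg}(v))$. Reversing the path $u=v_0\to v_1\to\cdots\to v_k$ leaves the out-degree of every interior $v_j$ unchanged (it loses one out-edge and gains another), so only the two endpoints contribute and the total drop in $\Phi$ is $\Oh(1)$ regardless of~$k$. Hence this potential pays for neither the path reversal nor the BFS that located it; the gap you flag in item~(i) is real and your proposed $\Phi$ does not close it.

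Brodal and Fagerberg instead fix, for the analysis only, a $c$-orientation $\sigma^\star$ of the final graph and take $\Phi$ to be the number of edges whose current orientation disagrees with~$\sigma^\star$. An insertion raises $\Phi$ by at most one, and a vertex with out-degree exceeding $2c$ necessarily has more than half of its out-edges disagreeing with~$\sigma^\star$; the reorientation procedure is arranged so that the work it does is charged against decreases of this comparison potential. That change of potential---measuring distance to a target orientation rather than a convex penalty on the degree sequence---is the missing ingredient in your sketch.
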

  
  \noindent
  The following corollary shows, by \cref{fct:nci}, that indeed the block-pairs at each level can be retrieved
  in worst-case constant time.

\begin{corollary}\label{cor:get}
Consider a set $S\sub P_n$ of non-crossing pairs arriving on-line.
One can support membership queries (asking if $(a,b)\in S$ and, if so, to return data associated with this pair)
in worst-case constant time with insertions processed in amortized constant time.
\end{corollary}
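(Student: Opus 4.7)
The plan is to reduce the corollary directly to the combination of \cref{fct:ncb} and \cref{thm:arb}, splitting $S$ into its diagonal and off-diagonal parts and treating each separately.

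First I would handle the trivial part: pairs of the form $(a,a)$. Since $a \in \{1,\ldots,n\}$, I maintain an array of length $n$ where the entry at position $a$ stores either a null marker or a pointer to the data associated with $(a,a)$. Both insertion and membership queries run in worst-case $\Oh(1)$ time, and the array is initialised in $\Oh(n)$ time at setup (or lazily, using standard tricks, if one insists on avoiding the initial pass).

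For the off-diagonal part $S' = S \setminus \{(a,a): 1\le a\le n\}$, I would invoke the outerplanar-graph interpretation already used in the proof of \cref{fct:ncb}: drawing the vertices $\{1,\ldots,n\}$ on a circle in order and each pair $(a,b)\in S'$ as a chord, the non-crossing hypothesis exactly says that the resulting graph drawing is outerplanar. By the Nash-Williams theorem, outerplanar graphs have arboricity at most $2$, and this bound holds for every subgraph, so in particular it holds no matter in which order the pairs of $S'$ arrive. I then plug $c=2$ into \cref{thm:arb}: the structure of Brodal and Fagerberg supports edge insertions in amortised $\Oh(1)$ time and adjacency queries in worst-case $\Oh(c)=\Oh(1)$ time. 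The data associated with a pair $(a,b)$ can be stored inside the edge record returned by the adjacency query, so membership together with data retrieval is answered in worst-case constant time.

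The only subtlety worth verifying is that the arboricity bound is robust under the on-line model of \cref{thm:arb}, which requires a uniform bound on the arboricity of the graph at every point in time; but since every subgraph of an outerplanar graph is outerplanar, the bound $c\le 2$ applies to every prefix of the insertion sequence, so no additional argument is needed. Combining the diagonal array with the Brodal--Fagerberg structure yields the claimed bounds and completes the proof.
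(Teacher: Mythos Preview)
Your proposal is correct and follows essentially the same approach as the paper: split off the diagonal pairs $(a,a)$ into an array, identify the remaining pairs with an outerplanar graph (as in \cref{fct:ncb}), use Nash--Williams to bound its arboricity by $2$, and apply \cref{thm:arb} with $c=2$. Your extra remark about the arboricity bound holding for every prefix of the insertion sequence is a nice sanity check that the paper leaves implicit.
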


    \begin{theorem}\label{thm:main}
    In a string of length $n$, a sequence of $q$ non-crossing LCE queries can be answered in total time $\Oh(q+n\cdot \alpha(n))$.
    \end{theorem}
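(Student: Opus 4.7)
The plan is to decompose the total running time into three additive pieces: the outer-loop overhead for dispatching input queries, the bookkeeping inside all $\LevelLCE{i}$ calls across the $\ceil{\log n}$ levels, and the cost of the $\LimitedLCE$ short-query tests.

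The starting point is the cardinality estimate for $S_i = \{(a,b) : \LevelLCE{i}(a,b)\text{ is called}\}$. By construction $|S_0|\le q$, and \cref{obs:few} yields $|S_i|\le 24n/2^i$ for $i\ge 1$; summing gives $\sum_{i\ge 0}|S_i| = \Oh(q+n)$. For the bookkeeping, each call $\LevelLCE{i}(a,b)$ must first locate the responsible block-pair. Because $\ceil{S_i/2^i}$ is non-crossing by \cref{fct:nci}, \cref{cor:get} provides worst-case constant-time retrieval (with amortised constant-time insertions). Once the block-pair has been located, \cref{lem:bp} bounds the remaining work of the call by a constant plus at most four higher-level $\LevelLCE{i+1}$ calls, all of which are already counted in $S_{i+1}$. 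Hence the bookkeeping contributes $\Oh(q+n)$ overall.

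For the $\LimitedLCE$ tests I would treat level $0$ separately from the rest. At level $0$ the threshold is the constant $3\cdot 2^0 = 3$, so the test $\LimitedLCE_{\le 3}(a,b)$ reduces to comparing at most three character pairs directly, which uses $\Oh(1)$ time per query and $\Oh(q)$ in total. All of the remaining $\LimitedLCE_{\le 3\cdot 2^i}$ queries (for $i\ge 1$) would be routed through a single on-line instance of the data structure of \cref{lem:CPM}. The crucial inequality is
\[
\sum_{i\ge 1} |S_i|\,\log\!\bigl(3\cdot 2^i\bigr) \;\le\; 24n\sum_{i\ge 1} \frac{i+2}{2^i} \;=\; \Oh(n),
\]
so \cref{lem:CPM} answers these queries in $\Oh(n\alpha(n))$ total time.

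Adding the three contributions yields the claimed $\Oh(q + n\alpha(n))$ bound. The step that I expect to be the main obstacle is the segregation of level $0$: a naive uniform application of \cref{lem:CPM} to every limited-LCE query would introduce a parasitic $\Oh(q\alpha(n))$ term, so the argument must rely on the observation that a constant threshold admits a direct character-by-character implementation outside the \cref{lem:CPM} framework, coupled with the geometric decay of $|S_i|$ which makes the weighted sum $\sum_{i\ge 1}|S_i|\log(3\cdot 2^i)$ collapse to $\Oh(n)$.
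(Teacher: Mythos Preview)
Your proposal is correct and follows essentially the same route as the paper: separate level~$0$ so that $\LimitedLCE_{\le 3}$ is handled by naive character comparison in $\Oh(1)$ per query, then use \cref{obs:few} together with \cref{lem:CPM} so that the weighted sum $\sum_{i\ge 1}|S_i|\cdot i$ collapses geometrically to $\Oh(n)$, giving $\Oh(n\alpha(n))$ for all higher-level limited-LCE calls. You are somewhat more explicit than the paper in invoking \cref{cor:get} for block-pair retrieval, but the overall decomposition and the key step you flag as the ``main obstacle'' match the paper's argument exactly.
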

    \begin{proof}
    For $i>0$, an $\LevelLCE{i}$ query, excluding the $\LevelLCE{i+1}$ queries called, requires $\Oh(i\cdot \alpha(n))$ time
    for answering a $\LimitedLCE_{\le 3 \cdot 2^i}$ query by \cref{lem:CPM}
    plus $O(1)$ additional time by \cref{lem:bp}. For $i=0$ we may compute $\LimitedLCE_{\le 3}$ na\"{i}vely in constant time,
    so the running time is constant.
    
    The number of $\LevelLCE{0}$ queries is $q$, while the number of 
    $\LevelLCE{i}$ queries for $i\ge 1$ is $\Oh(\frac{n}{2^i})$ by \cref{obs:few}.    
    The total running time is therefore
    $$
      \Oh\bigg(q+n\cdot \alpha(n) \cdot \sum_{i=1}^{\infty} \frac{i}{2^{i}}\bigg) = \Oh(q+n\cdot \alpha(n)).
    $$
    
    \vspace*{-1cm}\qed\end{proof}

  \section{Computing Runs}\label{sec:runs}
  Bannai et al. \cite{DBLP:journals/corr/BannaiIINTT14,DBLP:conf/soda/BannaiIINTT15} presented an algorithm
  for computing all the runs in a string of length $n$ that works in time proportional to answering $\Oh(n)$ LCE queries
  on the string or on its reverse.
  As main tool they used Lyndon trees.
  We note here that the LCE queries asked by their algorithm can be divided into a constant number of groups,
  each consisting of non-crossing LCE queries.
  Roughly speaking, this is based on the obvious fact that intervals in a Lyndon tree form a \emph{laminar family}, i.e.,
  for every two they are either disjoint or one of them contains the other.

  In the first phase, given a string $w$, the algorithm of \cite{DBLP:journals/corr/BannaiIINTT14,DBLP:conf/soda/BannaiIINTT15}
  constructs $\LTree_0(w)$ and $\LTree_1(w)$.
  For each $\idx \in \{0,1\}$, the construction of $\LTree_\idx(w)$ goes from right to left.
  Before the $k$-th step (for $k=n,\ldots,1$), we store on a stack the roots of subtrees of $\LTree_\idx(w)$
  that correspond to $w[k+1,n]$.
  Hence, the intervals corresponding to the roots on the stack are disjoint and cover the interval $[k+1,n]$.
  In the $k$-th step we push on the stack a single node corresponding to $[k,k]$.
  Afterwards, as long as the stack contains at least two elements and the top element $[k,l]$ and the second to top element $[a,b]$
  satisfy $w[k,l] \prec_\idx w[a,b]$, we pop the two subtrees from the stack and push one subtree with the root $[k,b]$.
  The lexicographical comparison is performed via an $\LCE(k,a)$ query.
  
  \begin{observation}\label{obs:ph1}
    The $\LCE$ queries asked in the construction of $\LTree_\idx(w)$ are non-crossing.
  \end{observation}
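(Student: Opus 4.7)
The plan is to reduce the non-crossing property to the laminar structure of node-intervals in the Lyndon tree. First I would note that whenever the algorithm issues a query $\LCE(k,a)$, the pair $(k,a)$ comes with a specific witness interval: at that moment, the top of the stack is a subtree root labelled $[k,l]$ and the second-to-top is a subtree root labelled $[a,b]$ with $a = l+1$. Thus the pair $(k,a)$ can be identified with the interval $[k,a-1] = [k,l]$.

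Next I would argue that every such $[k,l]$ is actually a node of $\LTree_\idx(w)$. Intervals appearing on the stack are subtree roots produced so far by the algorithm; they either are popped and merged into a larger subtree (becoming a proper child in the final tree) or persist until the end. In either case they are nodes of $\LTree_\idx(w)$. Consequently, the collection $\{[k,a-1] : \LCE(k,a) \text{ is asked}\}$ is a subset of the node-intervals of $\LTree_\idx(w)$, which form a laminar family.

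The main step is then the laminar-versus-crossing contradiction. Suppose two queries $\LCE(k_1,a_1)$ and $\LCE(k_2,a_2)$ are crossing, say $k_1 < k_2 < a_1 < a_2$. The corresponding intervals are $I_1 = [k_1, a_1-1]$ and $I_2 = [k_2, a_2-1]$. Since $k_1 < k_2 \le a_1 - 1 < a_2 - 1$, we have $I_1 \cap I_2 = [k_2, a_1 - 1] \ne \emptyset$ while neither $I_1 \subseteq I_2$ nor $I_2 \subseteq I_1$, contradicting laminarity. The symmetric case $k_2 < k_1 < a_2 < a_1$ is handled identically.

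The step I expect to need the most care is the second one: confirming that the top-of-stack interval at query time really is a node of the \emph{final} tree $\LTree_\idx(w)$, even though it may have just been formed by earlier merges during the current outer step or be about to be merged further as a result of the very query being asked. The invariant that each pushed or merged interval is a valid subtree root, maintained inductively over the outer loop, should make this routine; once it is in place, the combinatorial contradiction in the previous paragraph closes the proof.
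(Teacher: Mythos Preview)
Your argument is correct and matches the paper's own proof essentially line for line: both map a query $\LCE(k,a)$ to the interval $[k,a-1]$, observe that this interval is a node of $\LTree_\idx(w)$, and derive a contradiction with the laminar structure of the tree when two queries cross. The paper handles your ``most care'' step in the same implicit way you propose, simply asserting that the top-of-stack interval at query time is a subtree root of $\LTree_\idx(w)$; this is just the correctness invariant of the stack-based construction, so your inductive justification is adequate.
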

  \begin{proof}
    In the $k$-th step of the algorithm we only ask $\LCE(i,j)$ queries for $i=k$.
    Suppose towards contradiction that in the course of the algorithm we ask two LCE queries with $(i,j)$ and $(i',j')$
    such that $i<i'<j<j'$.
    The latter is asked at step $i'$, and at that moment $[i',j'-1]$ is a root of a subtree of $\LTree_\idx(w)$.
    Then the former is asked at step $i$, and then $[i,j-1]$ is a root of a subtree of $\LTree_\idx(w)$.
    This contradicts the fact that the intervals in $\LTree_\idx(w)$ form a laminar family.
  \qed\end{proof}

  In the second phase, for each node $[a,b]$ of each Lyndon tree $\LTree_\idx(w)$ we
  check if there is a run with period $p=b-a+1$ that contains $w[a,b]$.
  To this end we check how long does the periodicity with period $p$ extend to the right and to the left of $w[a,b]$.
  The former obviously reduces to an $\LCE(a,b+1)$ query and the latter to an LCE query in the reverse of $w$,
  which is totally symmetric.
  As the intervals in $\LTree_\idx(w)$ form a laminar family, we arrive at the following.

  \begin{observation}\label{obs:ph2}
    The $\LCE$ queries asked when right-extending the periodicity of the intervals from $\LTree_\idx(w)$ are non-crossing.
  \end{observation}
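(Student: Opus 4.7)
The plan is to reduce the claim to the laminar family property of the Lyndon tree intervals, in exactly the same spirit as the proof of \cref{obs:ph1}. First I would pin down the precise form of the queries involved: by the description just above the statement, the right-extension step for a node labelled $[a,b]$ of $\LTree_\idx(w)$ with period $p=b-a+1$ reduces to a single $\LCE(a,b+1)$ query (omitted when $b=n$). Hence the collection of pairs under consideration is exactly
\[
Q \;=\; \{(a,\,b+1) \,:\, [a,b]\text{ is a node of }\LTree_\idx(w)\text{ with } b<n\}.
\]

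Next, I would argue by contradiction. Suppose two elements $(a,b+1)$ and $(a',b'+1)$ of $Q$ are crossing. By the symmetry in the definition of crossing pairs we may assume $a < a' < b+1 < b'+1$, which is equivalent to the strict chain $a < a' \le b < b'$. But then the two Lyndon tree intervals $[a,b]$ and $[a',b']$ overlap in the nonempty subinterval $[a',b]$ while neither contains the other, contradicting the fact that the intervals labelling nodes of $\LTree_\idx(w)$ form a laminar family.

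I do not expect any real obstacle: the entire argument is a one-line reduction to laminarity, and the only bookkeeping is to verify that the $+1$ shift of the right endpoint does not create a spurious crossing. This is immediate because strict inequality $b<b'$ among original right endpoints is preserved as strict inequality $b+1<b'+1$ among query right endpoints, and similarly $a'\le b$ becomes $a'<b+1$. The case $a=a'$ cannot yield a crossing at all, since the definition of crossing requires a strict inequality on the first coordinate. Hence no two queries in $Q$ cross, establishing the observation.
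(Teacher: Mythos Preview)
Your proposal is correct and takes exactly the same route as the paper: the paper's justification is the single sentence ``As the intervals in $\LTree_\idx(w)$ form a laminar family, we arrive at the following,'' and you have simply made this inference explicit by checking that a crossing of the shifted pairs $(a,b+1)$ and $(a',b'+1)$ forces the underlying Lyndon-tree intervals $[a,b]$ and $[a',b']$ to properly overlap.
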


  By \cref{obs:ph1,obs:ph2}, \cref{thm:main} yields the following result and its immediate corollary.

  \begin{theorem}
    The Lyndon tree and the set of all runs in a string of length $n$ over a general ordered alphabet can be computed in $\Oh(n \alpha(n))$ time.
  \end{theorem}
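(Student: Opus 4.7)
The plan is to plug the non-crossing LCE data structure from \cref{thm:main} into the runs-computation algorithm of Bannai et al.\ \cite{DBLP:journals/corr/BannaiIINTT14,DBLP:conf/soda/BannaiIINTT15}. That algorithm, given a string $w$ of length $n$, performs $\Oh(n)$ LCE queries on $w$ and on its reverse $w^R$, and uses them in two phases: (i) building the Lyndon trees $\LTree_0(w)$ and $\LTree_1(w)$ via the right-to-left stack procedure recalled above, and (ii) for each internal node $[a,b]$ of each Lyndon tree, extending the period $p=b-a+1$ to the right by an $\LCE(a,b+1)$ query on $w$ and to the left by a symmetric query on $w^R$.

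First I would invoke \cref{obs:ph1} to assert that the queries issued during the construction of each $\LTree_\idx(w)$ form a non-crossing sequence, and \cref{obs:ph2} to assert the same for the right-extension queries. By symmetry (replacing $w$ with $w^R$) both observations apply verbatim to the reverse string as well. Thus the full collection of $\Oh(n)$ LCE queries executed by the algorithm partitions into a constant number of groups (two Lyndon-tree constructions and two families of period-extension queries, four groups in total), each of which is non-crossing and each of which must be answered on-line, because in both phases later queries depend on earlier answers.

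Next I would run, independently and in parallel with the algorithm, one copy of the data structure from \cref{thm:main} on $w$ and one on $w^R$, routing each of the four groups of queries to the appropriate copy. By \cref{thm:main}, answering a non-crossing sequence of $q_j$ queries costs $\Oh(q_j+n\cdot\alpha(n))$ time; summing over the constantly many groups and using $\sum_j q_j=\Oh(n)$ yields a total running time of $\Oh(n\cdot\alpha(n))$. All the remaining bookkeeping of the Bannai et al.\ algorithm (stack operations, laminar traversals, period verification) is linear, hence absorbed in this bound.

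The one point that requires a moment of care, and which I expect to be the only real subtlety, is verifying that the queries truly arrive on-line within each of the four groups in an order compatible with the data structure's interface. In Phase~1 this is immediate since each comparison is resolved before the next stack operation. In Phase~2 it suffices to process the nodes of $\LTree_\idx(w)$ in any fixed order (e.g.\ by right endpoint) and feed the associated extension query to the corresponding on-line instance; the non-crossing property from \cref{obs:ph2} does not depend on the order of issuing. Once this is in place, the theorem follows directly by combining \cref{thm:main} with the $\Oh(n)$-query reduction of Bannai et al. \qed
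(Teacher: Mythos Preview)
Your proposal is correct and follows essentially the same route as the paper: the paper's own argument is the single sentence ``By \cref{obs:ph1,obs:ph2}, \cref{thm:main} yields the following result,'' and your sketch is exactly this reduction with the four non-crossing groups and the on-line constraints spelled out explicitly.
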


  \begin{corollary}
    All the local periods and the number of all squares in a string of length $n$ over a general ordered alphabet can be computed in $\Oh(n \alpha(n))$ time.
  \end{corollary}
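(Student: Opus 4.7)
The plan is to reduce both tasks to the computation of the set of runs, so that the preceding theorem supplies the $\Oh(n\alpha(n))$ bound directly. First I would invoke the preceding theorem to compute, in $\Oh(n\alpha(n))$ time, the set $R$ of all runs of the input string $w$ of length $n$. By the runs theorem we have $|R| = \Oh(n)$, so $R$ is available as a list of triples (start, end, shortest period) whose total size is linear in $n$.

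Next I would appeal to the known reduction recalled in the introduction, namely that once $R$ is available one can compute all local periods and the total number of squares in an additional $\Oh(n)$ time (see \cite{DBLP:journals/tcs/CrochemoreIKRRW14}). Concretely, for each position $i \in \{1,\dots,n-1\}$ the local period is the shortest period $p$ of some run that spans the gap between positions $i$ and $i{+}1$, or equals the length of the shortest one-sided extension that avoids a mismatch; a linear-time sweep over the runs, bucketed by their endpoints, delivers all local periods simultaneously. For the total count of squares, each run with shortest period $p$ and length $\ell$ contributes exactly $\lfloor \ell/p \rfloor - 1$ squares, so a single pass over $R$ suffices.

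Adding the two phases gives $\Oh(n\alpha(n)) + \Oh(n) = \Oh(n\alpha(n))$, as required. The main (and only) nontrivial ingredient has already been established by the preceding theorem; the linear-time post-processing of $R$ is classical and cited in the introduction, so I expect no substantive obstacle beyond spelling out that both quantities are extracted from the run list. The proof therefore reduces to a one-sentence combination of the theorem with the reduction of \cite{DBLP:journals/tcs/CrochemoreIKRRW14}.
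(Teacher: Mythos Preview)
Your reduction is exactly what the paper does: the corollary is stated there with no proof, as an immediate consequence of the preceding theorem together with the $\Oh(n)$-time extraction of local periods and the square count from the set of runs already cited in the introduction \cite{DBLP:journals/tcs/CrochemoreIKRRW14}.

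One small caveat on the illustrative details you add. The formula $\lfloor \ell/p\rfloor-1$ per run is not the correct count of square occurrences: for the run $\mathtt{ababab}$ ($p=2$, $\ell=6$) it gives $2$, but there are three primitively-rooted square occurrences ($\mathtt{abab}$, $\mathtt{baba}$, $\mathtt{abab}$), and in general non-primitively-rooted squares must also be accounted for without double-counting across runs. Similarly, the local period at a boundary need not equal the period of any run through that boundary. Since you explicitly defer to the cited reduction rather than to these sketches, the argument is unaffected; just drop or correct the formulas if you keep the paragraph.
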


  \bibliographystyle{splncs03}
  \bibliography{arxiv}

\begin{thebibliography}{10}
\providecommand{\url}[1]{\texttt{#1}}
\providecommand{\urlprefix}{URL }

\bibitem{DBLP:conf/soda/BannaiIINTT15}
Bannai, H., I, T., Inenaga, S., Nakashima, Y., Takeda, M., Tsuruta, K.: A new
  characterization of maximal repetitions by {Lyndon} trees. In: Indyk, P.
  (ed.) 26th Annual {ACM-SIAM} Symposium on Discrete Algorithms, {SODA} 2015.
  pp. 562--571. {SIAM} (2015)

\bibitem{DBLP:journals/corr/BannaiIINTT14}
Bannai, H., I, T., Inenaga, S., Nakashima, Y., Takeda, M., Tsuruta, K.: The
  ``runs'' theorem (2015), arXiv:1406.0263v7

\bibitem{DBLP:journals/jct/Barcelo90}
Barcelo, H.: On the action of the symmetric group on the {F}ree {L}ie {A}lgebra
  and the partition lattice. J. Comb. Theory, Ser. {A}  55(1),  93--129 (1990)

\bibitem{DBLP:conf/wads/BrodalF99}
Brodal, G.S., Fagerberg, R.: Dynamic representation of sparse graphs. In:
  Dehne, F.K.H.A., Gupta, A., Sack, J., Tamassia, R. (eds.) Algorithms and Data
  Structures, {WADS} 1999. LNCS, vol. 1663, pp. 342--351. Springer (1999)

\bibitem{AlgorithmsOnStrings}
Crochemore, M., Hancart, C., Lecroq, T.: Algorithms on Strings. Cambridge
  University Press, New York, NY, USA (2007)

\bibitem{DBLP:conf/mfcs/CrochemoreI07}
Crochemore, M., Ilie, L.: Analysis of maximal repetitions in strings. In:
  Kucera, L., Kucera, A. (eds.) Mathematical Foundations of Computer Science,
  {MFCS} 2007. LNCS, vol. 4708, pp. 465--476. Springer (2007)

\bibitem{DBLP:journals/jcss/CrochemoreI08}
Crochemore, M., Ilie, L.: Maximal repetitions in strings. J. Comput. Syst. Sci.
   74(5),  796--807 (2008)

\bibitem{DBLP:conf/cpm/CrochemoreIT08}
Crochemore, M., Ilie, L., Tinta, L.: Towards a solution to the "runs"
  conjecture. In: Ferragina, P., Landau, G.M. (eds.) Combinatorial Pattern
  Matching, {CPM} 2008. LNCS, vol. 5029, pp. 290--302. Springer (2008)

\bibitem{DBLP:journals/tcs/CrochemoreIKRRW14}
Crochemore, M., Iliopoulos, C.S., Kubica, M., Radoszewski, J., Rytter, W.,
  Waleń, T.: Extracting powers and periods in a word from its runs structure.
  Theor. Comput. Sci.  521,  29--41 (2014)

\bibitem{DBLP:conf/spire/0001HIL15}
Fischer, J., Holub, S., I, T., Lewenstein, M.: Beyond the runs theorem. In:
  Iliopoulos, C.S., Puglisi, S.J., Yilmaz, E. (eds.) String Processing and
  Information Retrieval, {SPIRE} 2015. LNCS, vol. 9309, pp. 277--286. Springer
  (2015)

\bibitem{DBLP:/conf/cpm/GawrychowskiKRW16}
Gawrychowski, P., Kociumaka, T., Rytter, W., Waleń, T.: Faster longest common
  extension queries in strings over general alphabets. In: Grossi, R.,
  Lewenstein, M. (eds.) Combinatorial Pattern Matching, {CPM} 2016. LIPIcs,
  Schloss Dagstuhl--Leibniz-Zentrum für Informatik (2016), arXiv:1602.00447

\bibitem{DBLP:conf/lata/Giraud08}
Giraud, M.: Not so many runs in strings. In: Mart{\'{\i}}n{-}Vide, C., Otto,
  F., Fernau, H. (eds.) Language and Automata Theory and Applications, {LATA}
  2008. LNCS, vol. 5196, pp. 232--239. Springer (2008)

\bibitem{DBLP:journals/tcs/HohlwegR03}
Hohlweg, C., Reutenauer, C.: Lyndon words, permutations and trees. Theor.
  Comput. Sci.  307(1),  173--178 (2003)

\bibitem{DBLP:conf/soda/KociumakaRRW15}
Kociumaka, T., Radoszewski, J., Rytter, W., Waleń, T.: Internal pattern
  matching queries in a text and applications. In: Indyk, P. (ed.) 26th Annual
  {ACM-SIAM} Symposium on Discrete Algorithms, {SODA} 2015. pp. 532--551.
  {SIAM} (2015)

\bibitem{DBLP:conf/focs/KolpakovK99}
Kolpakov, R.M., Kucherov, G.: Finding maximal repetitions in a word in linear
  time. In: 40th Annual Symposium on Foundations of Computer Science, {FOCS}
  1999. pp. 596--604. {IEEE} Computer Society (1999)

\bibitem{Kolpakov99}
Kolpakov, R.M., Kucherov, G.: On maximal repetitions in words. In: Journal of
  Discrete Algorithms, Special Issue of Matching Patterns, pp. 159--186. Hermes
  Science Publishing (2000)

\bibitem{DBLP:journals/ipl/Kosolobov16}
Kosolobov, D.: Computing runs on a general alphabet. Inf. Process. Lett.
  116(3),  241--244 (2016)

\bibitem{NashWilliams}
Nash-Williams, C.S.J.A.: Decompositions of finite graphs into forests. J.
  London Math. Soc.  39, ~12 (1964)

\bibitem{DBLP:journals/tcs/PuglisiSS08}
Puglisi, S.J., Simpson, J., Smyth, W.F.: How many runs can a string contain?
  Theor. Comput. Sci.  401(1-3),  165--171 (2008)

\bibitem{DBLP:conf/stacs/Rytter06}
Rytter, W.: The number of runs in a string: Improved analysis of the linear
  upper bound. In: Durand, B., Thomas, W. (eds.) Symposium on Theoretical
  Aspects of Computer Science, {STACS} 2006. LNCS, vol. 3884, pp. 184--195.
  Springer (2006)

\bibitem{DBLP:journals/iandc/Rytter07}
Rytter, W.: The number of runs in a string. Inf. Comput.  205(9),  1459--1469
  (2007)

\end{thebibliography}

\newpage
\appendix

  \section*{Appendix}

\subsection*{Example of a Run}
The figure below shows an example of a run with period 3\ in a string.
This string contains also other runs, e.g.\ $w[10,12]$ with period 1 and $w[1,5]$ with period 2.

  \begin{figure}
    \begin{center}
      \begin{tikzpicture}

\foreach [count=\x]  \ch in {a,b, a,b,a, a,b,a, a, b,b,b,a,a} {
  \node at (\x*0.5,0) [above] {\tt \ch};
}
\node at (0,0.25) [left] {$w=$};
\node at (1.5,0) [below] {\tiny 3};
\node at (5,0) [below] {\tiny 10};

\begin{scope}[xshift=1.3cm,yshift=0.5cm]
\clip(0, 0) rectangle (4, 1);
\foreach \x in {0,1,2,3,4} {
  \draw (\x*1.5, 0) sin (\x*1.5+.75, 0.3) cos (\x*1.5 + 1.5, 0);  
}
\end{scope}

\end{tikzpicture}
    \end{center}
    \caption{Example of a run $w[3,10]$ in the string $w=\mathtt{ababaabaabbbaa}$.}\label{fig:example-run}
  \end{figure}

\subsection*{Example of a Lyndon Tree}
A Lyndon tree of a Lyndon word is obtained by applying standard factorization recursively on the Lyndon word.
The figure below presents an example of a Lyndon tree.
In the algorithm each node stores an interval describing the factor of $w$ that it corresponds to.

  \begin{figure}
    \begin{center}
          \scalebox{0.7}{
\begin{forest}
  for tree={
    internal,
    if n children=0{tier=terminal, 
    before typesetting nodes={
      label/.wrap pgfmath arg={below:{\Large\leaftext{#1}}}{content()},
          content={},
          leaf,
        },
    }
    {},
    edge path={
        \noexpand\path[\forestoption{edge}]
        (!u.parent anchor) -- ([xshift=0mm].child anchor);},
    s sep=30pt,
  },
  [
    [a]
    [, tier=t1
	  [, level=3
        [, tier=t5
          [a]
          [, tier=t7
            [a]
            [b]
          ]
        ]        
        [,tier=t6
          [a]
          [b]
        ]            
      ]
      [, level=2
        [, tier=t4
          [a]
          [, tier=t5
            [, tier=t6
              [a]
              [b]
            ]
            [b]
          ]
        ]
        [, tier=t5
          [, tier=t6
            [a]
            [b]
          ]
          [b]
        ]     
      ]
    ]
  ]
\end{forest}
}
    \end{center}
    \caption{The Lyndon tree $\LTree_0(w)$ of a Lyndon word $w=\mathtt{aaababaabbabb}$.}\label{fig:example-ltree}
  \end{figure}
\newpage 

\subsection*{Tree-Structured Interpretation of the Algorithm}
    We show on an example an alternative graphical 
    illustration of the behaviour of the query algorithm.
    To answer an $\LCE(a,b)$ query we traverse a sequence of block-pairs that we call here a \emph{working sequence}
    $(A_0,B_0), (A_1,B_1), \ldots$
    Note that the blocks form a tree-like structure, which lets us depict them using a binary tree; see \cref{fig:working}.

  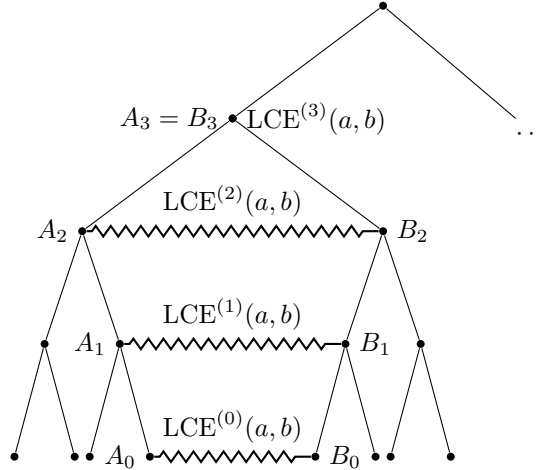
\begin{figure}
    \begin{center}
      \begin{tikzpicture}[level 1/.style={sibling distance=4cm},
level 2/.style={sibling distance=4cm},
level 3/.style={sibling distance=1cm},
level 4/.style={sibling distance=0.8cm}]
\node[internal] (z){}
  child {node [internal, label={left:$A_3=B_3$}] {}
         node [internal, label={right:$\LevelLCE{3}(a,b)$}] {}
    child {node [internal, label={left:$A_2$}] (a2) {}
      child {node[internal]{}
        child{node[internal]{}}
        child{node[internal]{}}
      } 
      child {node [internal, label={left:$A_1$}] (a1) {}
        child {node[internal]{}}
        child {node [internal, label={left:$A_0$}] (a0) {}}  
      }
    }
    child {node [internal, label={right:$B_2$}] (b2) {}
      child {node [internal, label={right:$B_1$}] (b1) {}
        child {node [internal, label={right:$B_0$}] (b0) {}}
        child {node[internal]{}}  
      }
      child {node[internal]{}
        child {node[internal]{}}
        child {node[internal]{}}
      } 
    }
  }
  child {node [below] {$\cdots$}}
;
\path[zig] (a2) --  (b2) node [midway, above] {$\LevelLCE{2}(a,b)$};
\path[zig] (a1) --  (b1) node [midway, above] {$\LevelLCE{1}(a,b)$};
\path[zig] (a0) --  (b0) node [midway, above] {$\LevelLCE{0}(a,b)$};
\end{tikzpicture}
    \end{center}
    \caption{
      A working sequence of block-pairs $(A_0,B_0), (A_1,B_1), \ldots$ used to answer an $\LCE(a,b)$ query.
      Here $A_0 = \{a\}$ and $B_0 = \{b\}$.
    }\label{fig:working}
  \end{figure}

    The behaviour of the algorithm depends on the states of the block-pairs.
    For example, if all the block-pairs are in state $\A$, their state becomes $\B$ and the sequence of queries
    is interrupted at the first level $i$ for which the query is short (i.e., $\LCE(a,b)<3 \cdot 2^i$).
    On the other hand, if a block-pair $(A_i,B_i)$ is in state $\B(a_0,b_0,L)$, then the query may be answered immediately
    if $a_0-a=b_0-b$, and otherwise two additional $\LevelLCE{i+1}$ queries are triggered and the state of the block-pair becomes $\C$.

\end{document}